\DeclareMathOperator*{\argmin}{arg\,min}
\newcommand{\NSW}{\text{NSW}}
\newcommand{\USW}{\text{USW}}
\newcommand{\MAXUSW}{\text{MAX-USW}}
\newcommand{\MNW}{\text{MNW}}
\newcommand{\MMS}{\text{MMS}}
\newcommand{\R}{\mathbb{R}}
\newcommand{\E}{\mathbb{E}}
\renewcommand{\cal}[1]{\mathcal{#1}}
\newcommand{\false}{\texttt{false}\xspace}
\newcommand{\lorenz}{\texttt{lorenz}}
\newcommand{\LOOP}{\texttt{Loop}}
\newtheorem{theorem}{Theorem}[section]
\newtheorem{lemma}[theorem]{Lemma}
\theoremstyle{definition}
\theoremstyle{definition}
\newtheorem{obs}[theorem]{Observation}
\newtheorem{example}[theorem]{Example}
\title{Yankee Swap: a Fast and Simple Fair Allocation
Mechanism for Matroid Rank Valuations}
\author{Vignesh Viswanathan and Yair Zick \\
University of Massachusetts, Amherst \\
{\texttt{\{vviswanathan, yzick\}@umass.edu}}}
\date{}
\begin{document}

\maketitle
\begin{quote}
    ``It Sounds Mean'' 
    
    \raggedleft --- Angela Martin, ``The Office'' S2E10
\end{quote}
\begin{abstract}
We study the problem of fair allocation of indivisible goods when agents have matroid rank valuations. Our main contribution is a simple algorithm based on the colloquial Yankee Swap procedure that computes provably fair and efficient Lorenz dominating allocations. 
While there exist polynomial time algorithms to compute such allocations, our proposed method improves upon them in two ways.
\begin{inparaenum}[(a)]
    \item Our approach is easy to understand and does not use complex matroid optimization algorithms as subroutines.
    \item Our approach is scalable; it is provably faster than all known algorithms to compute Lorenz dominating allocations.
\end{inparaenum}
These two properties are key to the adoption of algorithms in any real fair allocation setting; our proposed method brings us one step closer to this goal.
\end{abstract}

\section{Introduction}\label{sec:intro}
Fair allocation of indivisible goods is an extremely popular problem in the EconCS community.
We would like to assign a set of indivisible \emph{goods} to a set of \emph{agents} who express \emph{preferences} over the sets of goods (or \emph{bundles}) they receive. 
For example, consider the problem of assigning course slots to students \citep{Budish2011EF1,budish2017coursematch}. 
Each student $i$ has a preference over the set of classes they are assigned. 
Student preferences have implicit constraints on their structure: for example, a student may not take two classes if their schedules overlap (they will simply drop one if assigned conflicting classes). Similarly, students gain no value from receiving two seats in the same class. In addition, universities often impose bounds on the number of classes that a student may take in a given semester, or on the number of classes a student takes from a given major.  
We wish to identify an assignment of course seats to students that is both \emph{efficient} --- no student wants any additional available slots --- and \emph{fair} --- no student would prefer another student's assignment to their own. 
Course allocation can thus be naturally cast as an instance of a fair allocation problem.
To this end, one might wish to implement some fair allocation mechanism from the literature. 
However, the algorithms proposed in the fair division literature are becoming increasingly complex, which often precludes their consideration in university-wide applications. 
Consider for example the CourseMatch algorithm \citep{budish2017coursematch}, used to assign MBA students to classes at the Wharton School of Business. \citeauthor{budish2017coursematch} state that 
``To find allocations, CourseMatch performs a massive parallel heuristic search that solves billions of mixed-integer programs to output an approximate competitive equilibrium in a fake-money economy for courses''. This framework has been applied to the course allocation system at the UPenn Wharton School of Business, which admits approximately $1700$ students to roughly $350$ courses. To understand the system, students are referred to nine instructional videos and a 12-page manual.
While this system may be appropriate for a specialized MBA program, it may not be as effective for university-wide application, especially in settings with non-expert end-users.
Spliddit \citep{goldman2015spliddit} is another application of fair allocation mechanisms to real-world instances; however, it too does not scale well, as its underlying mechanism solves a mixed integer-linear program to find a Nash-welfare maximizing allocation \citep{Caragiannis2016MNW}. 

Indeed, without any restriction on student preferences, computing a fair and efficient allocation is computationally intractable. However, under some reasonable assumptions on student preferences, we can compute fair and efficient allocations in polynomial time. 
If we assume that students simply want to take as many classes as they are allowed to, subject to scheduling constraints and course limits, then student preferences induce \emph{binary submodular} valuations \citep{benabbou2019group, benabbou2021MRF}. 
Submodular valuations exhibit decreasing returns to scale: the larger the bundle agents already have, the less marginal gain they get from additional goods. 
Under binary submodular valuations, each agent values each additional good at either $1$ or $0$. 

Since these valuations correspond to the rank function of some matroid, they are commonly referred to as {\em matroid rank functions} (MRFs) \cite{oxley2011matroids}.
MRFs are highly structured, a structure that has been exploited in the optimization literature \citep{krause2014submodular,oxley2011matroids} and more recently, in fair allocation \citep{Babaioff2021Dichotomous,Barman2021MRFMaxmin,Barman2021MRFTruthful,benabbou2021MRF}.
Most notably, \citet{Babaioff2021Dichotomous} show that when agents have MRF valuations, there is a truthful poly-time algorithm to compute a leximin, envy free up to any good allocation that maximizes both the utilitarian social welfare and Nash social welfare. 
However, analysis of their algorithm places its runtime at roughly $O(n^6 m^{9/2})$ time (where $n$ and $m$ are the number of agents and goods respectively) which significantly hinders scalability. 
Moreover, their algorithm uses complex matroid optimization algorithms as subroutines; this prevents non-expert users from understanding the algorithm. 
Both issues are detrimental to the deployment of such an algorithm for course allocation.
Thus, our main goal is to
\begin{displayquote}
    \textit{develop a simple and fast algorithm to compute fair and efficient allocations under binary submodular valuations.}
\end{displayquote}
\subsection{Our Contribution}\label{subsec:contrib}
% Our main contribution is an algorithm that computes prioritized Lorenz dominating allocations. 
Our main contribution is the introduction of a novel algorithm (known colloquially as {\em Yankee Swap}\footnote{Yankee swap is also known as ``Nasty Christmas'' or ``White Elephant''. See \url{https://youtu.be/19ulSNSRKyU} for a discussion.}) for computing fair and efficient allocations for agents with MRF valuations. 
% Our main contribution is summarized in Theorem \ref{thm:yankee-swap-leximin}.

More specifically, Yankee Swap computes prioritized Lorenz dominating allocations. When agents have binary submodular valautions, prioritized Lorenz dominating allocations are known to be leximin, envy free upto any good (EFX) and maximize both the utilitarian social welfare and Nash social welfare. In addition, randomizing over agent priorities can result in {\em ex-ante} fairness guarantees such as ex-ante envy-freeness and ex-ante proportionality. Finally, prioritized Lorenz dominating allocations can be computed in a strategyproof manner. 

% \begin{restatable*}{theorem}{thmyankeeswap}\label{thm:yankee-swap-leximin}
% Yankee Swap is a strategyproof mechanism that efficiently computes a non-redundant Lorenz dominating allocation for agents with matroid rank valuations. 
% The resulting allocation is \begin{inparaenum}[(a)]
%     \item Leximin
%     \item USW-optimal
%     \item NSW-optimal
%     \item EFX and
%     \item $\frac12$-MMS.
% \end{inparaenum}
% \end{restatable*}

Yankee Swap is similar in spirit to the well known round robin algorithm. 
In the round robin algorithm, agents initially start with an empty bundle and proceed in rounds, picking a good in each round one by one from the pool of unallocated goods. 
Agents take sequential actions under Yankee Swap as well. 
Unlike the round robin algorithm, agents have the power to steal goods from other agents if they do not like any unassigned good. 
The agents who have goods stolen from them make up for their loss by either taking an unassigned good or by stealing a good from someone else. 
This procedure induces {\em transfer paths} --- an agent steals a good from someone, who potentially steals a good from someone else and this goes on until someone takes an unassigned good. 
The main difference between our method and the colloquial Yankee Swap is that we only allow an agent to steal a good when a transfer path exists. 
This means that the utility of every agent on the transfer path remains the same (except for the agent that initiates it, whose utility increases by $1$). 

While the algorithm itself is remarkably simple, the analysis is decidedly non-trivial. 
First, we show via a combinatorial argument that transfer paths must exist (\cref{lem:babaioff-paths}). Next, we show that Yankee Swap outputs balanced allocations, in the sense that once an agent cannot initiate transfer paths, the agents that hold goods they want have bundles of approximately the same size (Lemma \ref{lem:yankee-swap-size-upper-bounds}); 
using these facts and careful analysis, we prove \cref{thm:yankee-swap-leximin}. 
We then turn to analyzing the most complex part of Yankee Swap: computing transfer paths. To do so, we construct a \emph{good exchange graph}, and find shortest paths from the goods owned by the least utility agent and an unassigned good (\cref{algo:transfer}); we show that such paths on the good exchange graph correspond to valid path transfers (\cref{thm:path-correctness}). 
We conclude our analysis by comparing the worst-case runtime of Yankee Swap to the current state of the art. 
Yankee Swap runs in $O((n+m)(n + \tau) m^{2})$ time; $\tau$ is the time it takes to compute the valuation $v_i(S)$ for any bundle $S \subseteq G$ and agent $i \in N$. 
This is a significant speedup compared to \citeauthor{Babaioff2021Dichotomous}'s runtime of $O(n^6 m^{7/2} (m + \tau) \log{nm})$ (\cref{subsec:comparison}). 
We further note that, when $m = \Theta(n)$, our algorithm computes a \MAXUSW{} allocation faster than the matroid intersection based method used by \citet{benabbou2021MRF} and \citet{Babaioff2021Dichotomous}.

We are currently in the process of implementing Yankee Swap for assigning classes at the University of Massachusetts, Amherst, working with the university registrar's office, as well as the computer science department at the University of Massachusetts, Amherst.

% We also show that when agents have binary budget additive valuations (a minor generalization of binary additive valuations), these paths can be computed more efficiently allowing us to compute fair and efficient allocations in $O((n+m)n^2)$ time. Our runtimes surpass the runtimes of existing algorithms to compute \MNW{} solutions under binary budget-addtive valuations --- the algorithm by \citet{barman2018pathtransfers} runs in roughly $O(n^5 m \log{nm})$ time and the algorithm by \citet{Darmann2015MaximizingNP} runs in $O(n^{3/2} m^{5/2} \log{n})$ \footnote{Our analysis uses the min cost flow algorithm by \citet{Daitch2008MinCostMaxFlow}.}.

% The paper is organized as follows. In \cref{sec:prelims}, we introduce notation and define Lorenz dominating allocations --- which were shown to have several desirable fairness properties by \citet{Babaioff2021Dichotomous}. In \cref{sec:yankee-swap}, we present our algorithm to compute Lorenz dominating allocations. 
% % In \cref{sec:limitations}, we discuss how our approach can be extended to more general classes of valuations. 
% Finally, we present concluding remarks in \cref{sec:conclusion}.

% Lastly, we show that our allocations satisfy additional fairness guarantees as well. More specifically, we show that our allocations are proportional up to one good (PROP1) and guarantee each agent half of their groupwise maximin share \citep{Barman2018GMMS}. These results apply to their allocations as well since we show that our allocations are equivalent to theirs.
\subsection{Related Work}\label{subsec:rel-work}
Binary valuation functions (otherwise called {\em dichotomous preferences}) have been studied in various contexts in the economics and computer science literature. More specifically, binary valuations have been studied in mechanism design \citep{Ortega2018dichotomousmechanism,bogomolnaia2005dichotomousmechanism}, auctions \citep{Babaioff2009Auction, Mishra2013dichotomousauction}, and exchange problems \citep{Roth2005Dichotomousexchange, Aziz2019Dichotomousexchange}.

Binary valuations have also been extensively studied in fair allocation. 
\citet{halpern2020binaryadditive} and \citet{Suksumpong2022weightednash} study fair allocation in the restricted setting of binary additive valuations. \citet{Barman2021MRFMaxmin} study the computation of the maximin share under binary submodular valuations. 
% \citet{Kurokawa2018dichotomousallocation} study leximin allocations for a specific class of binary valuations. 
\citet{barman2018pathtransfers, Darmann2015MaximizingNP} and \citet{Barman2021approxmnw} study the computation of a max Nash welfare allocation under various binary valuation classes. Our work is not the first to explore transfer paths: \citet{Suksumpong2022weightednash, barman2018pathtransfers} and \citet{Barman2021MRFMaxmin} also utilize transfer path techniques in their algorithm design.

Lastly, \citet{benabbou2021MRF} and \citet{Babaioff2021Dichotomous} study the computation of fair and efficient allocations under matroid rank valuations. \citet{benabbou2021MRF} present the first positive algorithmic result for the valuation class by showing that a utilitarian social welfare maximizing and envy free up to one good allocation can be computed in polynomial time. This result was later significantly improved on by \citet{Babaioff2021Dichotomous} whose work we discuss in detail in \cref{subsec:lorenz-dominating}.

\section{Preliminaries}\label{sec:prelims}
We use $[t]$ to denote the set $\{1, 2, \dots, t\}$. For the sake of readability, for a set $A$ and a good $g$, we replace $A \setminus \{g\}$ (resp. $A \cup \{g\}$) with $A - g$ (resp. $A + g$).

We have a set of $n$ {\em agents} $N = [n]$ and a set of $m$ {\em goods} $G = \{g_1, g_2, \dots, g_m\}$. Each agent $i$ has a {\em valuation function} $v_i:2^G \mapsto \R_+$ --- valuation function $v_i(S)$ corresponds to the value agent $i$ has for the bundle of goods $S$. 
We let $\Delta_i(S,g) \triangleq v_i(S + g) - v_i(S)$ be the marginal utility of agent $i$ from receiving the good $g$, given that they already own the bundle $S$.
Unless otherwise stated, we assume that $v_i$ is a {\em matroid rank function} (MRF). Due to their equivalence, we use binary submodular valuation and matroid rank function interchangeably. More formally, a function $v_i$ is a matroid rank function if
\begin{inparaenum}[(a)]
\item $v_i(\emptyset) = 0$,
\item for every $S\subseteq G$ and every $g \in G$, $\Delta_i(S,g) \in \{0,1\}$, and
\item $v_i$ is submodular: for every $S \subseteq T \subseteq G$ and every $g \in G\setminus T$, $\Delta_i(S,g) \ge \Delta_i(T,g)$.
\end{inparaenum}
Since there may not be a polynomial space representation of these valuation functions, we assume oracle access to each $v_i$: given a bundle of goods $S \subseteq G$, we can compute $v_i(S)$ in at most time $\tau$. 
% For uninitiated readers, we present a short primer on matroid theory in \cref{apdx:matroids}. 

% In \cref{sec:binary-additive}, we study the special case when agent valuation functions are {\em binary budget-additive valuations}. More formally, a valuation function $v_i$ is binary budget-additive if:
% \begin{inparaenum}[(a)]
% \item for every $g \in G$, $v_i(\{g\}) \in \{0, 1\}$, and
% \item for every subset $S \subseteq G$, $v_i(S) = \min \{\sum_{g \in S} v_i(\{g\}), b_i\}$ where $b_i$ is a non-negative integer.
% \end{inparaenum}

An {\em allocation} is a partition of the set of goods $X = (X_0, X_1, \dots, X_n)$ where each agent $i$ receives the bundle $X_i$, and $X_0$ consists of the unallocated goods. 
An allocation is {\em non-redundant} (or \emph{clean}) if for every agent $i \in N$, and every good $g \in X_i$, $v_i(X_i) > v_i(X_i - g)$. \citet{benabbou2021MRF} show that for MRF valutions, this is equivalent to having $v_i(X_i) = |X_i|$ for every $i \in N$.
We sometimes refer to $v_i(X_i)$ as the {\em utility} (or {\em value}) of $i$ under the allocation $X$.
For ease of analysis, we treat $0$ as an agent whose valuation function is $v_0(S) = |S|$; this valuation function is trivially an MRF. 
Due to the choice of $v_0$, any clean allocation for the set of agents $N$ is also trivially clean for the set of agents $N + 0$. However, none of the fairness notions we discuss consider the (dummy) agent $0$. 
Several fairness desiderata have been proposed and studied in the literature; three of them stand out:
\begin{description}[leftmargin=0cm]
\item[Envy-Freeness:] An allocation is {\em envy free} if no agent prefers another agent's bundle to their own. 
This is impossible to guarantee when all goods are allocated --- e.g. when there are two agents and only one item. 
Due to this impossibility, several relaxations have been studied in the literature. 
The most popular relaxation of envy freeness is {\em envy freeness up to one good} (EF1) \citep{Budish2011EF1,Lipton2004EF1}. An allocation $X$ is EF1 if no agent envies another agent after dropping some good for the latter agent's bundle, i.e. if for every $i,j\in N$, if $v_i(X_i) < v_i(X_j)$ there exists some $g \in X_j$ such that $v_i(X_i) \ge v_i(X_j-g)$. 
An EF1 allocation can be computed in polynomial time for most realistic valuation classes \citep{Lipton2004EF1}.
More recently, a stronger relaxation called {\em envy free up to any good} (EFX) \citep{Caragiannis2016MNW} has gained popularity: an allocation is EFX if no agent envies another agent after dropping {\em any} good from the latter agent's bundle, i.e. if $v_i(X_i)< v_i(X_j)$ then for \emph{every} $g \in X_j$ $v_i(X_i) \ge v_i(X_j-g)$. 
In contrast to EF1 allocations, the existence of EFX allocations is still an open question for several valuation classes \citep{Plaut2017EFX}.
\item[Maximin Share:] An agent's {\em maximin share} (\MMS) is defined as the value they would obtain had they divided the goods into $n$ bundles themselves and picked the worst of these bundles. 
More formally, 
    \begin{align*}
        \MMS_i =  \max_{X = (X_1, X_2, \dots, X_n)} \min_{j\in [n]} v_i(X_j)
    \end{align*} 
\citet{procaccia2014fairenough} show that agents cannot always be guaranteed their maximin share; however, past works \citep{Kurokawa2018Maxmin} guarantee that every agent receives a fraction of their maximin share. For some $c \in (0, 1]$, an allocation $X$ is $c$-$\MMS$ if for every agent $i \in N$, $v_i(X_i) \ge c\cdot \MMS_i$.
\item[Leximin:] an allocation is leximin if it maximizes the value provided to the agent with least value and conditioned on this, maximizes the value provided to the agent with the second least value and so on. 
While leximin allocations are computationally intractable when agent valuations are unrestricted \citep[Theorem 4.2]{benabbou2021MRF},
% \footnote{A simple reduction to the subset sum problem can be used to show that maximizing the utility of the agent with minimum utility is computationally intractable.}
they can be computed in polynomial time under MRF valuations \citep{Babaioff2021Dichotomous}.
\end{description}
When envy is the main consideration, an allocation where no agent gets any good is envy free. 
While this is fair, it is very inefficient. 
Therefore, coupled with fairness metrics, algorithms usually guarantee some {\em efficiency} criterion as well. 
We consider two popular notions of efficiency:
\begin{description}[leftmargin=0cm]
\item[Utilitarian Social Welfare:] The {\em utilitarian social welfare} of an allocation is defined as the {\em sum} of the value obtained by each agent i.e. $\USW(X) =\sum_{i \in N} v_i(X_i)$. 

\item[Nash Social Welfare:] The {\em Nash social welfare} of an allocation is defined as the {\em product} of the value obtained by each agent i.e. $\NSW(X) = \prod_{i \in N} v_i(X_i)$. 
\end{description}
Allocations which maximize utilitarian social welfare and Nash social welfare are\textbf{} referred to as \MAXUSW{} and \MNW{} respectively. Since allocations have $\NSW(X) = 0$ when some agent receives no items, we adopt the same convention as \citet{Caragiannis2016MNW}, and first minimize the number of agents with zero utility; subject to that, we maximize the product of positive utilities.

Before we proceed, we preset a simple useful result about matroid rank valuations --- if an agent values the bundle $Y$ more than the bundle $X$, there must be a good $g \in Y$ such that $\Delta_i(X,g) = 1$. Variants of this result have also been shown by \citet{Babaioff2021Dichotomous} and \citet{benabbou2021MRF} and therefore, we omit the proof.
% Since we are usually only concerned about the utility received by each agent and not which goods each agent received, \cref{obs:wlog-nonredundant} allows us, in these cases, to assume an allocation is without loss of generality non-redundant. 
% If an allocation $X$ is non-redundant, the set $X_i$ is an independent set in the matroid defined by $v_i$. 
% This allows us to use the properties of independent sets like the matroid exchange property when comparing two non-redundant allocations; this makes non-redundance a desirable property when analysing allocations. 
% Formally, non-redundance gives us the following result:

\begin{obs}[\citet{benabbou2021MRF}]\label{obs:exchange}
Suppose that agents have binary submodular valuations. If $X$ and $Y$ are two allocations and $v_i(X_i) < v_i(Y_i)$ for some $i \in N + 0$, there exists a good $g \in Y_i \setminus X_i$ such that $\Delta_i(X_i, g) = 1$. 
\end{obs}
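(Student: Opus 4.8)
The plan is to prove the contrapositive-style statement: assuming $v_i(X_i) < v_i(Y_i)$, I will show that if \emph{no} good in $Y_i \setminus X_i$ had marginal value $1$ with respect to $X_i$, we could derive $v_i(Y_i) \le v_i(X_i)$, a contradiction. The engine of the argument is submodularity, used to propagate the ``zero marginal'' property from $X_i$ to every superset of $X_i$ encountered while greedily adding the goods of $Y_i \setminus X_i$ one at a time.

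Concretely, first I would record the elementary fact that an MRF is monotone: since $\Delta_i(S,g) \in \{0,1\} \subseteq \R_+$ for every $S$ and $g$, adding goods never decreases value, so $v_i(X_i \cup Y_i) \ge v_i(Y_i)$. Next, suppose for contradiction that $\Delta_i(X_i, g) = 0$ for every $g \in Y_i \setminus X_i$. Enumerate $Y_i \setminus X_i = \{h_1, h_2, \dots, h_k\}$ and define the chain $S_0 = X_i$ and $S_j = S_{j-1} + h_j$ for $j \in [k]$, so that $S_k = X_i \cup Y_i$. For each $j$ we have $X_i \subseteq S_{j-1}$ and $h_j \in G \setminus S_{j-1}$ (as $h_j \notin X_i$ and $h_j \neq h_1, \dots, h_{j-1}$), so submodularity gives $\Delta_i(S_{j-1}, h_j) \le \Delta_i(X_i, h_j) = 0$; since marginals are nonnegative, $\Delta_i(S_{j-1}, h_j) = 0$. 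Telescoping these equalities yields $v_i(X_i \cup Y_i) = v_i(S_k) = v_i(S_0) = v_i(X_i)$. Combining with monotonicity, $v_i(Y_i) \le v_i(X_i \cup Y_i) = v_i(X_i)$, contradicting the hypothesis $v_i(X_i) < v_i(Y_i)$. Hence some $g \in Y_i \setminus X_i$ satisfies $\Delta_i(X_i, g) = 1$, which is exactly the claim (note $\Delta_i(X_i,g)\in\{0,1\}$ forces this marginal to equal $1$).

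This argument applies verbatim to the dummy agent $0$ as well, since $v_0(S) = |S|$ is an MRF, so the statement for $i \in N + 0$ needs no separate treatment. I do not anticipate a genuine obstacle here; the only point requiring a little care is the bookkeeping that each $S_{j-1}$ indeed contains $X_i$ and excludes $h_j$, so that the submodularity axiom as stated (quantified over $S \subseteq T \subseteq G$ and $g \in G \setminus T$) is legitimately invoked with $S = X_i$, $T = S_{j-1}$, $g = h_j$. The reliance on monotonicity is also worth stating explicitly, even though it follows immediately from the binary-marginal condition.
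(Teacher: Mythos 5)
Your argument is correct and complete: the chain construction legitimately invokes the submodularity axiom at each step ($X_i \subseteq S_{j-1}$, $h_j \notin S_{j-1}$), nonnegativity of marginals gives both monotonicity and the telescoping collapse $v_i(X_i \cup Y_i) = v_i(X_i)$, and the contradiction with $v_i(X_i) < v_i(Y_i)$ follows; the empty-case $Y_i \setminus X_i = \emptyset$ is also handled automatically. The paper itself omits a proof of \cref{obs:exchange}, deferring to \citet{benabbou2021MRF} and \citet{Babaioff2021Dichotomous}, and your argument is precisely the standard one used there, so there is no substantive divergence to report.
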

% \begin{proof}
% Let $Y_i = \{g_1, \dots, g_k\}$ and $Z^j = \{g_1, \dots, g_j\}$ ($Z^0 = \emptyset$). We need to show that $\Delta_i(X_i, g_j) = 1$ for some $j \in [k]$. 

% If $\Delta_i(X_i \cup Z^{j-1}, g_j) = 1$ for some $j \in [k]$, then we have $1 \ge \Delta_i(X_i, g_j) \ge \Delta_i(X_i \cup Z^{j-1}, g_j) = 1$ and we are done. If this is not the case, then we have 
% \begin{align*}
%     v_i(Y_i \cup X_i) = v_i(X_i) + \sum_{j = 1}^k \Delta_i(X_i \cup Z^{j-1}, g_j) = v_i(X_i)
% \end{align*}
% However, since marginal gains are not negative, we must have $v_i(X_i \cup Y_i) \ge v_i(Y_i) > v_i(X_i)$ which creates a contradiction. 
% Therefore, $\Delta_i(X_i \cup Z^{j-1}, g_j) = 1$ for some $j \in [k]$ which implies that $\Delta_i(X_i, g_j) = 1$ for some $j \in [k]$.
% \end{proof}

% The above observation is a good example of how treating $X_0$ as a bundle belonging to an agent $0$ makes analysis simpler. 

\subsection{Prioritized Lorenz Dominating Allocations}\label{subsec:lorenz-dominating}
We define the {\em sorted utility vector} of an allocation $X$ as $\vec{u}^X = (u^X_1, u^X_2, \dots u^X_n)$ which corresponds to the vector of agent valuations $(v_1(X_1), v_2(X_2), \dots v_n(X_n))$ sorted in ascending order (ties broken arbitrarily). 
An allocation $X$ \emph{Lorenz dominates} the allocation $Y$ (denoted $X \succeq_{\lorenz} Y$) if for all $k \in [n]$, $\sum_{j = 1}^k u^X_j \ge \sum_{j = 1}^k u^Y_j$. 
An allocation $X$ is {\em Lorenz dominating} if for \emph{every} allocation $Y$, we have $X \succeq_{\lorenz}Y$. 

\citet{Babaioff2021Dichotomous} show that when agents have MRF valuations, a non-redundant Lorenz dominating allocation always exists and satisfies several desirable fairness and efficiency guarantees. We formalize this result below.
\begin{theorem}[\citet{Babaioff2021Dichotomous}]\label{thm:lorenz-fair}
When agents have MRF valuations, a non-redundant Lorenz dominating allocation always exists and is \MNW, \MAXUSW, EFX, leximin and $\frac12$-\MMS. 
\end{theorem}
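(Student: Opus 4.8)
I would split the statement into (i) existence of a non-redundant (henceforth \emph{clean}) Lorenz dominating allocation, and (ii) the claim that \emph{any} clean Lorenz dominating allocation $X$ is \MNW, \MAXUSW, EFX, leximin and $\tfrac12$-\MMS. I expect (i) to be the genuine obstacle; (ii) should reduce to a handful of short exchange arguments built on \cref{obs:exchange} and cleanness. For (i), note first that cleaning an allocation (greedily discarding goods of zero marginal value) leaves every agent's utility unchanged, hence preserves the sorted utility vector; so it suffices to exhibit a clean $X$ with $\vec{u}^{X}\succeq_{\lorenz}\vec{u}^{Y}$ for every clean $Y$ (every allocation has a clean version with the same utility vector). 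I would take $X$ to be leximin-optimal among clean allocations (this exists, there being finitely many) and argue it is Lorenz dominating: if not, pick the least $k$ with $\sum_{j\le k}u^{X}_j<\sum_{j\le k}u^{Y}_j$ for some clean $Y$, let $L$ be the $k$ lowest-utility agents under $X$; since the sum of the $k$ smallest $Y$-utilities is at most $\sum_{i\in L}v_i(Y_i)$, some $i\in L$ has $v_i(X_i)<v_i(Y_i)$, and \cref{obs:exchange} yields $g\in Y_i\setminus X_i$ with $\Delta_i(X_i,g)=1$. If $g$ is unallocated under $X$, moving it to $i$ keeps the allocation clean and strictly improves the leximin value, a contradiction; otherwise $g\in X_{i'}$ and one must \emph{re-route} --- give $g$ to $i$ and have $i'$ recover a good, either from the unallocated pool or by stealing from a further agent, producing a transfer path along which only the initiator's utility changes. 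Showing that such a path must exist is the crux; this is a matroid-exchange/augmenting-path argument --- exactly the transfer-path lemma established later in this paper --- and is also witnessed directly by the Yankee Swap construction. Finally, a Lorenz dominating allocation can be cleaned without altering its (still dominating) utility vector.

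For (ii), fix a clean Lorenz dominating $X$. \MAXUSW{} is immediate: $\USW(X)=\sum_j u^{X}_j\ge\sum_j u^{Y}_j=\USW(Y)$ for all $Y$. Leximin follows because a strictly Lorenz-improving allocation is in particular leximin-greater (the sorted vectors agree on a prefix and then the improver is strictly larger), so nothing can leximin-beat $X$. For \MNW, I would first show $X$ minimizes the number of zero-utility agents: if some $Y$ has only $z$ zeros then $\sum_{j\le z+1}u^{X}_j\ge\sum_{j\le z+1}u^{Y}_j\ge 1$, so $X$ has at most $z$ zeros. Restricting to positive coordinates, the positive-utility vector of $X$ Lorenz-dominates that of any competitor with the same number of zeros and has at least as large a sum; a short Schur-concavity argument then gives the larger product --- a vector of positive integers that is majorized by, and has no larger sum than, another has no larger product (if the sums differ, first push the surplus onto the largest coordinate to reduce to the equal-sum case). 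Hence $\NSW(X)\ge\NSW(Y)$ for all $Y$.

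For EFX, suppose $i$ envies $j$, i.e. $v_i(X_i)<v_i(X_j)$. If $v_j(X_j)\ge v_i(X_i)+2$, then by \cref{obs:exchange} there is $h\in X_j\setminus X_i$ with $\Delta_i(X_i,h)=1$; moving $h$ from $j$ to $i$ raises $v_i$ by $1$ and, by cleanness of $X_j$, lowers $v_j$ by exactly $1$ --- a Robin Hood transfer between utilities differing by at least $2$, hence a strict Lorenz improvement, contradicting Lorenz dominance. So $v_j(X_j)\le v_i(X_i)+1$; combined with $v_i(X_i)<v_i(X_j)\le|X_j|=v_j(X_j)$ (cleanness), this forces $v_i(X_j)=|X_j|$, so $X_j$ is independent in agent $i$'s matroid, whence $v_i(X_j-g)=|X_j|-1=v_i(X_i)$ for every $g\in X_j$ --- exactly EFX.

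For $\tfrac12$-\MMS, fix $i$, let $t=\MMS_i$, take an MMS-defining $n$-partition for $i$, and shrink each block to an $i$-independent set, obtaining pairwise disjoint $I_1,\dots,I_n$ of size $t$. Adding the goods of $I_\ell$ to $X_i$ one at a time, the number that strictly raise $v_i$ is $v_i(X_i\cup I_\ell)-v_i(X_i)\ge t-v_i(X_i)$, and each such good $g$ has $\Delta_i(X_i,g)=1$ (submodularity); these are distinct across $\ell$, so there are at least $n(t-|X_i|)$ goods with $\Delta_i(X_i,\cdot)=1$, using $v_i(X_i)=|X_i|$. Each such good lies in some $X_j$ with $j\ne i$ (it cannot be unallocated, that would contradict \MAXUSW, and cannot be in $X_i$), and that $j$ satisfies $v_j(X_j)\le|X_i|+1$, since otherwise moving the good to $i$ is again a strict Lorenz improvement. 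So these $\ge n(t-|X_i|)$ distinct goods are spread over the $\le n-1$ agents $j\ne i$, each holding at most $|X_j|=v_j(X_j)\le|X_i|+1$ of them, giving $n(t-|X_i|)\le(n-1)(|X_i|+1)$; a direct check shows this is incompatible with $|X_i|\le\lceil t/2\rceil-1$, so $v_i(X_i)=|X_i|\ge t/2=\tfrac12\MMS_i$. I expect the transfer-path existence step in (i) and, to a lesser extent, this final counting argument to be the only places needing real care; everything else is routine once Lorenz dominance is in hand.
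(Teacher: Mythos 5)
First, a point of comparison: the paper does not prove this statement at all --- it is imported verbatim from \citet{Babaioff2021Dichotomous}, and the paper's own machinery (\cref{lem:babaioff-paths}, \cref{thm:yankee-swap-leximin}) explicitly \emph{assumes} the existence of a Lorenz dominating allocation when it needs one. So the question is whether your from-scratch argument stands on its own. Your part (ii) does: given that a clean Lorenz dominating $X$ exists, your derivations of \MAXUSW{} (prefix sum at $k=n$), leximin (a leximin improvement forces a strict prefix-sum improvement), \MNW{} (minimize zeros via the $(z+1)$-prefix, then Schur-concavity of the product under majorization, padding the larger-sum vector at its top coordinate), EFX (either a Pigou--Dalton transfer contradicts Lorenz dominance, or $v_i(X_j)=|X_j|=v_i(X_i)+1$ and every single-good removal kills the envy), and $\tfrac12$-\MMS{} (the counting bound $n(t-|X_i|)\le (n-1)(|X_i|+1)$) are all correct and are essentially the standard exchange arguments.

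The genuine gap is in part (i), existence, exactly at the step you flag as the crux but then wave through. You need a re-routing path ``along which only the initiator's utility changes,'' i.e.\ a transfer path from $i$ that terminates at the unallocated pool (agent $0$). \cref{lem:babaioff-paths} does not give you this: applied to your leximin-optimal clean $X$ and an arbitrary clean witness $Y$ of the Lorenz failure, it only yields a path from $i$ to \emph{some} agent $k$ with $|X_k|>|Y_k|$, and nothing prevents that $k$ from being a real agent with $v_k(X_k)\le v_i(X_i)+1$; executing such a path is not a leximin (or Lorenz) improvement, so no contradiction with leximin-optimality follows. The paper only gets ``$0$ is the sole member of $S^{+}$'' in \cref{thm:yankee-swap-leximin} by first proving \cref{lem:lorenz-dominance} and truncating $Y$, and that lemma's proof uses the hypothesis that $Y$ \emph{is} Lorenz dominating --- precisely the existence statement you are trying to establish --- so you cannot borrow it; for the same reason, ``witnessed directly by the Yankee Swap construction'' is circular, since the correctness proof of Yankee Swap cites \citet{Babaioff2021Dichotomous} for the existence of $Y$. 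Closing the hole requires a genuinely different ingredient, e.g.\ observing (via the matroid union theorem, as in \cref{apdx:matroids}) that the utility vectors of clean allocations are the integral points of a polymatroid and invoking the classical fact that its egalitarian/lexicographically-optimal point Lorenz dominates all others, or reproducing Babaioff et al.'s own exchange argument; as written, the leximin-plus-transfer-path sketch does not suffice.
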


The one minor drawback of the ordering defined above is that it does not distinguish between two allocations with the same sorted utility vector, even when agents receive a different utility in the two allocations. 
% To understand this better, consider the following example.

\begin{example}\label{ex:lorenz}
Consider a problem instance with two agents $\{1, 2\}$ and three goods $\{g_1, g_2, g_3\}$. The valuation function for each agent $v_i(S) = |S|$ for $i \in \{1,2\}$. Any allocation that gives two goods to one agent and one good to the other is Lorenz dominating, but under one agent 1 may receive two goods, and under another they may receive one.
% Any Lorenz dominating allocation for this problem has sorted utility vector $(1, 2)$: one agent gets one good while the other agent gets two. 
% Consider two allocations $X$ and $Y$ defined as follows
% \begin{align*}
%     X_1 = \{g_1, g_2\} &\qquad X_2 = \{g_3\} \\
%     Y_1 = \{g_1\} &\qquad Y_2 = \{g_2, g_3\}
% \end{align*}
% Both $X$ and $Y$ are Lorenz dominating but give agent $1$ a different utility: agent $1$ gets two goods in the first allocation but only one in the second.
\end{example}

It is desirable to distinguish between the two allocations described above; if we can create a solution concept where agent $1$ always gets two goods whereas under another agent $2$ gets two goods, randomizing between the two allocations will give us a (random) allocation which is arguably more fair since both agents have the same expected utility.

To this end, \citet{Babaioff2021Dichotomous} introduce a priority order over the set of agents. 
The priority ordering is modelled as a permutation over the set of agents $\pi: N \mapsto [n]$ where agents with a lower value of $\pi$ have a higher priority. 
This ordering is enforced by perturbing the valuations in the original instance.
More formally, to enforce a priority ordering, \citet{Babaioff2021Dichotomous} create a new fair allocation instance, where agent $i$ has the valuation function $v'_i$:
\begin{align*}
    v'_i(S) = v_i(S) +\frac{\pi(i)}{n^2}
\end{align*}
We refer to this problem instance (resp. valuation function $v'$) as the {\em augmented} problem instance (resp. augmented valuation function) with the priority order $\pi$. When $\pi$ is clear from context, we simply refer to this problem instance as the augmented problem instance.
We sometimes refer to the value $\frac{\pi(i)}{n^2}$ as the perturbation.
\citet{Babaioff2021Dichotomous} show that any Lorenz dominating allocation for the augmented instance is also a Lorenz dominating allocation for the original fair allocation instance \citep[Theorem 4]{Babaioff2021Dichotomous} and therefore retain all the desirable fairness properties in \cref{thm:lorenz-fair}.
They refer to this allocation as a Lorenz dominating allocation w.r.t. the priority order $\pi$. 

This priority order can be used to guarantee additional fairness properties --- by randomly choosing this priority order, we can generate allocations that are ex-ante envy-free and ex-ante proportional.
A random allocation $X$ is {\em ex-ante envy-free} if, in expectation, no agent envies another agent i.e. $\E_X[v_i(X_i)] \ge \E_X[v_i(X_j)]$ for all $i, j \in N$. 
Similarly, a random allocation $X$ is {\em ex-ante proportional} if each agent, in expectation, receives a utility greater than the $n$-th fraction of their value for the entire bundle of goods i.e. $\E_X[v_i(X_i)] \ge \frac{v_i(G)}{n}$. 

\citet{Babaioff2021Dichotomous} define the randomized prioritized egalitarian (RPE) mechanism which performs the following steps:
\begin{enumerate}
    \item Choose a priority order $\pi$ uniformly at random.
    \item Elicit the preferences of each agent. If the agent's valuation function is not an MRF, set their value for all bundles to be equal to $0$.
    \item Compute a Lorenz dominating allocation with respect to the ordering $\pi$ for the above elicited preferences.
\end{enumerate}

\citet{Babaioff2021Dichotomous} show that the RPE mechanism computes an ex-ante envy-free and ex-ante proportional allocation. 
In addition, the RPE mechanism is {\em strategyproof}: no agent can get a better outcome by misreporting their valuation function. 
This strategyproofness result is independent of the algorithm used to compute Lorenz dominating allocations; therefore, it applies to Yankee Swap.
More recently, \citet{barman2022groupstrategyproof} show that the RPE mechanism satisfies the stronger notion of {\em group strategyproofness} --- a mechanism is group strategyproof if it is not possible for a set of agents to lie to each get a better outcome. This result applies to Yankee Swap as well.

\section{Yankee Swap}\label{sec:yankee-swap}
We now present a simple, fast algorithm for computing prioritized Lorenz dominating allocations.
The algorithm we propose is known colloquially as a \emph{Yankee swap}. Our objective is to ensure that at every round, the least utility agent receives a \emph{useful} good, i.e. a good for which they have a positive marginal utility.  
We proceed in rounds, and at every round agents have a choice of either taking an unallocated useful good or stealing a useful good from another agent, who then either takes an unassigned useful good or steals one from another agent, and so on. 
Thus, the first agent increases their utility by $1$, and the other agents' utilities remain the same: if a good was stolen from them, they must have recovered their utility by either stealing a good from another agent or by taking an unassigned good.

More formally, we define transfer paths recursively as follows: a \emph{transfer path} in an allocation $X$ is a sequence of agents in $N \cup \{0\}$, $(p_1,p_2,\dots,p_r)$, such that for some good $g \in X_{p_2}$:
\begin{inparaenum}[(a)]
    \item $\Delta_{p_1}(X_{p_1},g) = 1$.
    \item If $X'$ is the allocation that results from moving $g$ from $p_2$ to $p_1$ in $X$; then there exists a transfer path $(p_2, \dots, p_r)$ in $X'$ that does not involve the transfer of the good $g$.
\end{inparaenum}
In other words, there is a set of goods that can be transferred along the path such that agent $p_1$'s utility increases by $1$, agents $p_2,\dots,p_{r-1}$'s utilities are unchanged, and agent $p_r$'s utility decreases by $1$. 
While paths can be cyclic i.e. an agent can be present multiple times in a transfer path sequence, every good may be transferred at most once.

Since goods are transferred at most once, paths can be characterized by a sequence of goods $(g_{i_1}, g_{i_2}, \dots, g_{i_k})$ and an agent $i$ where $g_{i_k}$ gets transferred to the agent that has $g_{i_{k-1}}$, $g_{i_{k-1}}$ gets transferred to the agent that has $g_{i_{k-2}}$ and so on until finally, $g_{i_1}$ gets transferred to agent $i$. Depending on the context, we use both notations of transfer paths in our algorithms and analysis.

% More formally, we define these transfer paths in an allocation $X$ as a sequence of goods $(g_1, g_2, \dots, g_k)$ and an agent $i$ such that 
% \begin{inparaenum}[(a)]
%     \item $\Delta_i(i, g_1) = 1$,
%     \item for every $j \in N + 0$, $v_i(X_i \Delta \{g_l, g_{l+1}: g_l \in A_j\}) = v_i(X_i)$ 
% \end{inparaenum}

\subsection{The Algorithm}\label{sec:yankee-swap-algo}
The Yankee Swap algorithm takes as input a fair allocation instance $(N, G, \{v_i\}_{i \in N})$ and a priority ordering $\pi: N \mapsto [n]$ over the set of agents.
The algorithm first places all goods in $X_0$ --- all goods are initially unassigned --- and has all players playing (denoted by having them in the set $P$). We pick an agent $i \in P$ (who's not agent $0$) with the least utility so far and check if there is a transfer path starting from them and ending at $0$; ties are broken in favor of agents with higher priority. 
If a path exists, we transfer goods backwards along the path giving the agent an additional unit of value. 
Otherwise, we remove them from $P$, at which point their utility can no longer increase. Once all agents are removed from $P$, \cref{algo:yankee-swap} terminates.

\begin{algorithm}
    \caption{Yankee Swap}
    \label{algo:yankee-swap}
    \begin{algorithmic}[1]
        \Require The set of agents $N = [n]$, the set of goods $G$, oracle access to valuation functions $\{v_i\}_{i \in N}$ and a priority order over the agents $\pi: N \mapsto [n]$
        \Ensure A prioritized Lorenz dominating allocation $X$
        \State $X = (X_0, X_1, \dots, X_n) \gets (G, \emptyset, \dots, \emptyset)$
        \State $P \gets N$
        \While{$P \ne \emptyset$}
            \State Let $P' = \argmin\{|X_i|:i \in [n]\}$
            \State Let $i$ be the highest priority agent in $P'$ according to $\pi$
            \State Check if there exists a transfer path in $X$ starting at $i$ which ends at $0$
            \If{a path $(g_{i_1}, g_{i_2}, \dots, g_{i_k})$ exists}
                \State Transfer goods along the path and update $X$        
            \Else
                \State $P\gets P-i$
            \EndIf
        \EndWhile
        \State \Return $X$
    \end{algorithmic}
\end{algorithm}

\subsection{Analysis}\label{subsec:yankee-swap-correctness}
Algorithm \ref{algo:yankee-swap} computes prioritized Lorenz dominating allocations. 
Before we go into the technical details of the proof, we remark that our algorithm is very similar to the round robin algorithm in two ways. 
First, agents in $P$ have roughly the same bundle size at any iteration of the algorithm. 
More formally, all agents in $P$ have bundles that differ in size by at most $1$, with higher priority agents receiving the slightly larger bundles. 
Second, much like how the round robin algorithm is EF1 at every iteration, the Yankee Swap algorithm is Lorenz dominating at every iteration (ignoring the utility of agent $0$ who controls the unassigned items).  

Before we dive into any result about the algorithm, it is important to show sufficient conditions for a transfer path to exist. 
A similar version of the following lemma appears in \citet[Lemma 17]{Babaioff2021Dichotomous} and \citet[Lemma 3.12]{benabbou2021MRF}. 
% However, \citet{Babaioff2021Dichotomous} and \citet{benabbou2021MRF} focus on the allocations that result from transferring goods along a transfer path rather than the existence of a transfer path itself; we therefore include a proof with our claim.
\begin{lemma}\label{lem:babaioff-paths}
Let $X$ and $Y$ be two non-redundant allocations for the set of agents $N + 0$. Let $S^{-}$ be the set of all agents $i \in N + 0$ where $|X_i| < |Y_i|$, $S^{=}$ be the set of all agents $i \in N + 0$ where $|X_i| = |Y_i|$ and $S^{+}$ be the set of all agents $i \in N + 0$ where $|X_i| > |Y_i|$.
For any agent $i \in S^{-}$, there exists a transfer path from $i$ to some agent $k \in S^{+}$ in $X$.
\end{lemma}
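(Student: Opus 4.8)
The plan is to argue by induction on the number of \emph{misplaced} goods, $\Phi(X,Y) \coloneqq \sum_{j \in N+0} |X_j \setminus Y_j|$. Since $X$ and $Y$ both partition $G$, this quantity also equals $\sum_{j \in N+0}|Y_j \setminus X_j|$, so $\Phi(X,Y)=0$ exactly when $X=Y$, in which case $S^{-}=\emptyset$ and there is nothing to prove. Throughout I will freely use that non-redundancy gives $v_j(X_j)=|X_j|$ and $v_j(Y_j)=|Y_j|$ for every $j \in N+0$ (the characterization of \citet{benabbou2021MRF} recalled in the preliminaries).

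For the inductive step, fix $i \in S^{-}$. Since $v_i(X_i)=|X_i|<|Y_i|=v_i(Y_i)$, \cref{obs:exchange} yields a good $g \in Y_i\setminus X_i$ with $\Delta_i(X_i,g)=1$; let $j$ be the unique agent with $g \in X_j$ (possibly $j=0$), noting $j\neq i$ because $g\notin X_i$. If $j\in S^{+}$, then $(i,j)$ is already a transfer path in $X$ (move $g$ from $j$ to $i$: $i$'s utility rises by one, $j$'s drops by one), and we are done. Otherwise $|X_j|\le|Y_j|$; let $X'$ be obtained from $X$ by moving $g$ from $X_j$ to $X_i$. One checks $X'$ is still non-redundant, using $\Delta_i(X_i,g)=1$ and non-redundancy of $X_j$. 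Because $g\in Y_i$ and $Y$ is a partition, $g\notin Y_j$, so $g$ has moved out of a wrong bundle into a right one and $\Phi(X',Y)=\Phi(X,Y)-1$. Moreover $|X'_j|=|X_j|-1<|Y_j|$, so $j\in S^{-}(X',Y)$; and since we only shrank $X_j$ and grew $X_i$ (with $|X'_i|=|X_i|+1\le|Y_i|$), one has $S^{+}(X',Y)=S^{+}(X,Y)$. Applying the induction hypothesis to $(X',Y)$ and the agent $j$ gives a transfer path $P'$ from $j$ to some $k\in S^{+}(X',Y)=S^{+}(X,Y)$ in $X'$.

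It remains to stitch $P'$ onto $i$. If $P'$ never transfers the good $g$, then directly from the recursive definition of a transfer path, prepending $i$ to $P'$ is a transfer path from $i$ to $k$ in $X$: the first step moves $g$ from $j$ to $i$, landing in $X'$, and $P'$ then completes a valid transfer in $X'$ that avoids $g$. The delicate case is when $P'$ does transfer $g$. Since $i$ owns $g$ in $X'$ and each good on a transfer path moves at most once, such a transfer must move $g$ \emph{out of} $i$'s current bundle; hence $i$ occurs on $P'$, and $i$ cannot be the terminal agent of $P'$ (that agent lies in $S^{+}$ and $i$ does not), so immediately afterwards $i$ recoups a unit from the next agent on $P'$ via some good $g'\neq g$ with $\Delta_i(X_i,g')=1$. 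Taking $P'$ of minimum length, I would short-circuit $P'$ at this occurrence of $i$, obtaining either a strictly shorter transfer path from $j$ to $S^{+}$ in $X'$ (contradicting minimality) or directly a transfer path from $i$ to $k$ in $X$ in which $i$ grabs $g'$ rather than $g$; either outcome completes the induction.

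The crux is exactly this last step: ensuring the recursively produced path $P'$ can be taken to avoid the freshly moved good $g$, or else be surgically rerouted. The subtlety is that transfer paths may revisit agents, so naively cutting $P'$ can break one of the marginal-value equalities at a revisited agent whose interim bundle has been perturbed by earlier transfers along $P'$. Making the short-circuit rigorous requires combining minimality of $P'$ with submodularity of the $v_j$ to rule out such problematic detours; I expect this bookkeeping to be the only genuinely nontrivial part of the argument.
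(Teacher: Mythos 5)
Your setup mirrors the paper's construction (same use of \cref{obs:exchange}, same one-good move from $j$ to $i$, same potential -- your $\Phi$ decreasing by one is equivalent to the paper's $\sum_j |X'_j \cap Y_j|$ increasing by one), but the proof is not complete: the step you yourself flag as the crux is a genuine gap, not bookkeeping. Your induction hypothesis only yields \emph{some} transfer path $P'$ from $j$ to $S^{+}$ in $X'$, and nothing prevents $P'$ from transferring $g$; the ``short-circuit'' repair is only sketched, and as described it does not go through. When $P'$ moves $g$ out of $i$, the bundle $T$ that $i$ holds at that moment need not contain $X_i$ (at earlier visits along $P'$, $i$ may have surrendered goods originally in $X_i$), so submodularity does not let you conclude $\Delta_i(X_i,g')=1$ for the good $g'$ that $i$ receives next; and truncating or rerouting $P'$ at that occurrence of $i$ can invalidate the marginal-value conditions at agents revisited later, whose interim bundles depend on the transfers you just deleted. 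Minimality of $P'$ alone does not rule these problems out, so as written the inductive step is unproven.

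The fix -- and this is exactly what the paper does -- is to abandon the black-box recursion and maintain a stronger invariant along the construction: every good transferred is taken from $Y_p \setminus X'_p$ of the current agent $p$ and handed to the agent who owns it under $Y$, so after its transfer it lies in $X'_p \cap Y_p$ and is never eligible to be selected again. In particular $g \in Y_i \cap X'_i$ can never be moved out of $i$, so the ``delicate case'' simply cannot arise. In your framework this amounts to strengthening the induction hypothesis: prove the existence of a transfer path from $j$ to $S^{+}$ \emph{all of whose transferred goods go to their $Y$-owners}. With that strengthening, $P'$ automatically avoids $g$ (since $g \notin Y_p$ for $p \ne i$ and $g \notin Y_i \setminus X'_i$), prepending $i$ is immediate from the recursive definition of transfer paths, and the surgical rerouting is unnecessary. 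Without some such strengthening, your argument does not close.
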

\begin{proof}
Our proof is constructive. We construct the transfer path using the following recursive loop which we denote by $\LOOP(X, Y, i)$:

Since $X$ and $Y$ are non-redundant, using \cref{obs:exchange}, there is some good $g \in Y_i \setminus X_i$ such that $\Delta_i(X_i, g) = 1$. This good must belong to some other agent, say $j \in N+ 0$. If $j \in S^{+}$, we are done. Otherwise, move the good from $j$ to $i$ to create a new allocation $X'$. We now compare $X'$ and $Y$ and define ${S'}^{-}$, ${S'}^{=}$ and ${S'}^{+}$ analogously to ${S}^{-}$, ${S}^{=}$ and ${S}^{+}$. Since $j \in S^{-} \cup S^{=}$ and $j$ lost a good in $X'$, we must have that $j \in {S'}^{-}$. Further, $i \in {S'}^{=} \cup {S'}^{-}$ which means $S'^{+} = S^{+}$. We then repeat this process with the allocations $X'$ and $Y$ and the agent $j$ i.e. we run $\LOOP(X', Y, j)$. 

Note that the above loop must terminate since $\sum_{j \in N + 0} |X'_j \cap Y_j|$ increases by $1$ at every iteration, and is upper bounded by $|G|$. 
Since $S^{+}$ does not change, at some iteration, we take a good from an agent in $S^+$ and the above loop terminates. 
% Note that if $\sum_{j \in N + 0} |X'_j \cap Y_j| = |G|$, then $X'$ and $Y$ are identical; this implies that if the loop does not end after $\sum_{j \in N + 0} |X'_j \cap Y_j|$ reaches $|G|$, we cannot apply the reasoning in the above paragraph which is a contradiction.

Let $(p_1, p_2, \dots, p_r, k)$ be the sequence of agents we take a good from in the above loop --- we take a good from $p_1$ and give it to $i$, take a good from $p_2$ and give it to $p_1$, and so on.
In order to show that $(i, p_1, \dots, p_r, k)$ forms a transfer path, the last thing we need to show is that $g$ never gets transferred out of $i$ which would imply that goods get transferred at most once. This is easy to see: the only goods that can potentially get transferred in $X'$ are the goods in $Y_j \setminus X'_j$ for any $j \in N + 0$. Since $g \in Y_i \cap X'_i$, it will never get transferred out. Similarly, no good that is transferred to some agent ever gets transferred out.
\end{proof}

When $X$ is the allocation computed by Yankee Swap and $Y$ is the prioritized Lorenz dominating allocation, the above lemma shows that we can, in a way, move closer to $Y$ from $X$ using path transfers. While repeated applications of this lemma to any non-redundant allocation $X$ will ultimately terminate at the prioritized Lorenz dominating allocation, this is not a very efficient process. 

Our next three lemmata establish some important properties of the algorithm.
The first Lemma shows that the allocation maintained by Yankee Swap is always non-redundant; thereby, allowing us to use Lemma \ref{lem:babaioff-paths}.

% \begin{lemma}\label{lem:lorenz-augmented-goods-to-agents}
% Given a fair allocation instance $\tup{N,G,v}$, and an ordering $\pi$, in any Lorenz dominating allocation with respect to the augmented instance, the auxiliary good $a_i$ is assigned to agent $i$.
% \end{lemma}
% \begin{proof}
% % Let $Y$ be a Lorenz dominating allocation for the non-augmented instance. From the definition of Lorenz dominating with respect to the ordering $\pi$, there is a Lorenz dominating allocation of auxiliary goods and real goods $G \cup A$ in the augmented problem instance such that removing the goods in $A$ results in the allocation $Y$. 
% Let $Y'$ be a Lorenz dominating allocation with respect to the augmented instance. If $a_i \notin Y'_i$ for some $i \in N$, transferring the good $a_i$ to $Y'_i$ increases the utility of $i$ without affecting the utility of any other agent, creating an allocation that Lorenz dominates $Y'$.
% \end{proof}

\begin{lemma}\label{lem:yankee-swap-non-redundant}
At the beginning of every iteration, the allocation $X$ of the Yankee Swap algorithm is non-redundant for the set of agents $N+0$.
\end{lemma}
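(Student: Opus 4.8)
The natural approach is induction on the iterations of the \texttt{while} loop. At initialization, $X = (G, \emptyset, \dots, \emptyset)$: agent $0$ holds all of $G$ and $v_0(G) = |G|$, so $0$'s bundle is clean, and every other agent holds $\emptyset$, which is trivially clean since $v_i(\emptyset) = 0$. Hence the base case holds. For the inductive step, assume $X$ is non-redundant at the start of some iteration; I must show that the allocation $X'$ obtained after this iteration is also non-redundant. If the iteration removes an agent from $P$ without changing $X$, there is nothing to prove, so the only interesting case is when a transfer path $(g_{i_1}, \dots, g_{i_k})$ starting at the chosen agent $i$ and ending at $0$ is found and goods are transferred along it.

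The key point is to unpack what a transfer path does to the bundles. By the definition of a transfer path (and the good-sequence characterization given just before \cref{sec:yankee-swap-algo}), the net effect of executing the path is: agent $i$ gains exactly one good $g_{i_1}$ with $\Delta_i(X_i, g_{i_1}) = 1$; each intermediate agent $p$ on the path simultaneously loses one good $g$ and gains one good $g'$ with the property that, at the moment of the swap, the incoming good restores the lost unit of value, i.e. $v_p$ of the new bundle equals $v_p$ of the old bundle; and agent $0$ loses one good $g_{i_k}$. I would verify cleanliness bundle-by-bundle using \cref{obs:exchange} and the recursive structure of the path. For agent $i$: since the old bundle $X_i$ was clean, $v_i(X_i) = |X_i|$, and since $\Delta_i(X_i, g_{i_1}) = 1$ we get $v_i(X_i + g_{i_1}) = |X_i| + 1 = |X_i + g_{i_1}|$, which by the \citet{benabbou2021MRF} characterization means $X_i + g_{i_1}$ is clean. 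For agent $0$: any subset of $G$ has value equal to its cardinality under $v_0$, so losing a good preserves cleanliness trivially. For each intermediate agent $p$ with new bundle $X_p - g + g'$: the defining property of the transfer path guarantees $\Delta_p(X_p - g, g') = 1$ (this is the ``$\Delta_{p_1}(X_{p_1}, g) = 1$'' clause applied within the residual allocation $X'$), so $v_p(X_p - g + g') = v_p(X_p - g) + 1 = (|X_p| - 1) + 1 = |X_p|$; since submodularity and the clean-equals-full-rank characterization give $v_p(X_p - g + g') = |X_p| = |X_p - g + g'|$, this bundle is also clean.

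The main obstacle is being careful about the fact that a transfer path can revisit an agent multiple times (the excerpt explicitly notes paths can be ``cyclic''), so a single agent $p$ may lose and gain several goods over the course of one path execution. The clean way to handle this is to not reason about the cumulative effect on $p$ directly, but rather to observe that each elementary move ``take $g$ from $p$, give $g'$ to $p$'' along the path — applied in the order induced by the recursive definition — preserves non-redundancy of the whole allocation, by exactly the per-agent arguments above (only the two agents touched by an elementary move change, and for each of them the new bundle has full rank). Composing these elementary-move invariants from $X$ through to $X'$ then yields that $X'$ is non-redundant. I should also double-check the edge case where the path has length one, i.e. $i$ takes an unassigned good directly from $0$, which is subsumed by the agent-$i$ and agent-$0$ arguments above. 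This completes the induction and hence the lemma.
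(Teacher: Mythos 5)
Your proposal is correct and follows essentially the same route as the paper: induction over iterations, with the base case given by the choice of $v_0$ and the inductive step resting on the fact that, by the recursive definition of transfer paths, every good an agent acquires has marginal value $1$ for them at the moment of the swap, so value equals cardinality is preserved. The paper states this argument tersely; your elementary-move decomposition is just a more detailed write-up of the same idea.
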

\begin{proof}
By the definition of transfer paths, agents only ever take unassigned goods/steal goods that they have a positive marginal gain for, i.e. useful goods. 
Thus, after executing a transfer path, the allocation remains non-redundant. 
Since at the beginning of the first iteration the allocation is non-redundant (by our choice of $v_0$), it follows that non-redundancy is maintained throughout.
\end{proof}

The second Lemma shows that picking the least utility agent with highest priority is equivalent to picking the agent with least value according to the augmented valuation $v'$. 
This will help us when analyzing the properties of the allocation output by Yankee Swap with respect to the augmented valuations $v'$.
\begin{lemma}\label{lem:choice-equivalence}
At the beginning of any iteration of \cref{algo:yankee-swap}, let $i$ be the highest priority agent with least utility such that $i\in P$ i.e. let $i$ be the agent chosen by the algorithm to be the starting point of the transfer path. Let $W$ be the allocation at the beginning of the iteration. Then $i$ is the least valued agent in $W$ with respect to the augmented valuation function among all the agents in $P$.
\end{lemma}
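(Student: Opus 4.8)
The plan is to unfold the definition of the augmented valuation and reduce the statement to a short arithmetic comparison of bundle sizes and priorities. By \cref{lem:yankee-swap-non-redundant}, the allocation $W$ at the start of the iteration is non-redundant for $N+0$, so $v_j(W_j) = |W_j|$ for every $j \in N + 0$, and hence $v'_j(W_j) = |W_j| + \frac{\pi(j)}{n^2}$ for every $j \in N$. Thus it suffices to prove that $|W_i| + \frac{\pi(i)}{n^2} \le |W_j| + \frac{\pi(j)}{n^2}$ for every $j \in P$.

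Next I would extract the two properties of $i$ guaranteed by the algorithm's selection rule: (i) $|W_i| \le |W_j|$ for all $j \in P$, since $i$ is picked from the bundle-size minimizers among the agents still in $P$; and (ii) whenever $j \in P$ satisfies $|W_j| = |W_i|$ we have $\pi(i) \le \pi(j)$, since $i$ is the highest-priority (smallest $\pi$) agent among those minimizers (and, $\pi$ being a bijection, the inequality is strict for $j \ne i$). With these in hand the argument is a two-way case split on $j \in P$. If $|W_j| = |W_i|$, then (ii) gives $\frac{\pi(i)}{n^2} \le \frac{\pi(j)}{n^2}$ and the desired inequality follows immediately. If $|W_j| > |W_i|$, then $|W_j| \ge |W_i| + 1$, so it is enough to check $\frac{\pi(i)}{n^2} \le 1 + \frac{\pi(j)}{n^2}$, i.e.\ $\pi(i) - \pi(j) \le n^2$; this holds because $\pi$ maps into $[n]$, so $\pi(i) - \pi(j) \le n - 1 < n^2$. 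By (i) these two cases are exhaustive.

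I do not expect a genuine obstacle here: the content is entirely in the observation that the perturbation terms $\frac{\pi(\cdot)}{n^2}$ are too small to overcome even a single-unit gap in bundle size, which is precisely the bound $\pi(i) - \pi(j) < n^2$ used above, together with the (routine) appeal to non-redundancy that lets us replace each $v_j(W_j)$ by $|W_j|$. The only point requiring a little care is to read the selection rule of \cref{algo:yankee-swap} as ``minimum bundle size among agents in $P$, ties broken toward higher priority'', so that properties (i) and (ii) apply to exactly the set $P$ over which the lemma quantifies.
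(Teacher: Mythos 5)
Your proposal is correct and follows essentially the same route as the paper's proof: invoke non-redundancy to identify $v'_j(W_j)$ with $|W_j| + \frac{\pi(j)}{n^2}$, then split on whether $|W_j| > |W_i|$ (perturbations smaller than one unit cannot flip the comparison) or $|W_j| = |W_i|$ (ties broken by priority), with the case $|W_j| < |W_i|$ excluded by the selection rule. Your version merely makes the arithmetic bound $\pi(i) - \pi(j) < n^2$ explicit, which the paper states informally.
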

\begin{proof}
% Let $P_t$ be the set of agents in $P$ at the beginning of the iteration in consideration. 
For any agent $u \in P$, if $|W_u| > |W_i|$, we have $v'_u(W_u) > v'_i(W_i)$ since the augmented valuations add a value smaller than $1$ for both agents compared to the original valuation function and $W$ is non-redundant (\cref{lem:yankee-swap-non-redundant}). 
If $|W_u| = |W_i|$, then we must have $v'_u(W_u) > v'_i(W_i)$ since we chose $i$ as the agent with highest priority with a bundle of size $|W_i|$. 
Since our first constraint on $i$ was that it needed to minimize $|W_i|$, we can never have $|W_u| < |W_i|$.
\end{proof}
Finally, we formalize the round-robin balancedness notion described at the beginning of this subsection.
\begin{lemma}\label{lem:yankee-swap-size-upper-bounds}
Let $i$ be the agent chosen to initiate a transfer path in some iteration of \cref{algo:yankee-swap}. 
Let $W$ be the allocation at the beginning of the iteration. 
If $|W_i| = k$, then the following holds:
\begin{enumerate}
    \item Agents with higher priority than $i$ have a bundle of size at most $k+1$.
    \item Agents with a lower priority than $i$ have a bundle of size at most $k$.
\end{enumerate}
\end{lemma}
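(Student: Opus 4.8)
The plan is to argue by induction on the iteration count, using the invariant that at the start of every iteration of \cref{algo:yankee-swap}, the bundle sizes among agents currently in $P$ differ by at most $1$, with higher-priority agents holding the weakly larger bundles, and moreover every agent no longer in $P$ has a bundle of size at most that of the smallest bundle currently held by an agent in $P$ (since a removed agent could not increase, while agents in $P$ keep growing). From this invariant the two claimed bounds follow immediately: if $i$ is chosen in this iteration with $|W_i| = k$, then $i$ is a smallest-bundle agent among $P$ and is the highest-priority such agent (\cref{lem:choice-equivalence} and the tie-breaking rule), so any agent in $P$ with higher priority than $i$ cannot also have size $k$ (else it would have been chosen instead), hence has size exactly $k+1$ by the invariant; any agent in $P$ with lower priority has size $k$ or $k+1$, but the invariant forbids a lower-priority agent from having the larger size $k+1$ while $i$ has $k$, so it has size at most $k$; and any agent not in $P$ has size at most $k$ by the second half of the invariant.

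The main work is therefore establishing the invariant is preserved by one iteration. First I would handle the base case: initially all bundles are empty and $P = N$, so the invariant holds trivially. For the inductive step, consider an iteration where agent $i$ with $|W_i| = k$ is chosen. There are two cases. If no transfer path from $i$ to $0$ exists, then $X$ is unchanged and $i$ is removed from $P$; I need to check that $i$'s size $k$ is at most the smallest size among the remaining agents in $P$ — but every remaining agent in $P$ had size $k$ or $k+1$ by the invariant and none had size smaller than $k$ (since $i$ was a minimizer), so this holds. If a transfer path exists, executing it increases $|X_i|$ by $1$ to $k+1$ and leaves all other bundle sizes unchanged; I must verify the size/priority ordering is still consistent. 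Before the transfer, the only agents in $P$ of size $k$ with higher priority than $i$ — there are none, by the choice of $i$; and any agent of size $k+1$ already had higher size, so bumping $i$ to $k+1$ only possibly creates a tie with agents of size $k+1$. I need those size-$(k+1)$ agents to have lower priority than $i$: by the invariant, any agent in $P$ of size $k+1$ has higher priority than any agent in $P$ of size $k$, which would include $i$ — so this is exactly the point that needs care.

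The hard part will be precisely pinning down the priority relationship among equal-sized agents after the bump, i.e. showing that when $i$ grows from $k$ to $k+1$, $i$ becomes the \emph{lowest}-priority agent among the size-$(k+1)$ group in $P$. The cleanest way to see this: the invariant says that at the start of the iteration, every size-$(k+1)$ agent in $P$ has higher priority than every size-$k$ agent in $P$, and $i$ is the highest-priority size-$k$ agent; so after the bump, $i$ slots in as the new minimum-priority element of the size-$(k+1)$ block, and the "higher priority $\Rightarrow$ weakly larger bundle" ordering is preserved. I would also double-check that the agents whose goods were transferred along the path stay at the same size (true by definition of transfer path, where only $p_r = 0$ loses net value and $0 \notin P$), so no size-$k$ agent in $P$ accidentally drops below $k$. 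Assembling these observations gives the inductive step, and the lemma follows.
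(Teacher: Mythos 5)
Your proof is correct, but it takes a genuinely different route from the paper's. The paper argues directly by contradiction: if some agent $j$ violated the stated bound at the current iteration $t$, it looks back at the unique earlier iteration $t'$ at which $|X_j|$ crossed the threshold, and contradicts the selection rule because at $t'$ agent $i$ was still in $P$ with a bundle of size at most $k$ (sizes never decrease and removal from $P$ is permanent), so $j$ could not have been the least-utility, highest-priority member of $P$ then. You instead run a forward induction over iterations, maintaining a stronger invariant --- within $P$ bundle sizes differ by at most one and are weakly monotone in priority, and every removed agent's (frozen) bundle is no larger than the current minimum over $P$ --- from which the lemma follows immediately, including for agents outside $P$, which the paper handles only implicitly. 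Both arguments rest on the same facts (sizes grow by at most one per iteration and never decrease, transfer paths preserve the size of every agent except the initiator and agent $0$, and the algorithm picks the minimum-size, highest-priority agent in $P$), so the combinatorial content is the same; your route costs more bookkeeping in the inductive step, but as a by-product it formalizes the ``round-robin balancedness'' remark that the paper states only informally before the lemma. One small slip: mid-argument you say you need the size-$(k+1)$ agents in $P$ to have \emph{lower} priority than $i$; in fact the invariant forces them to have \emph{higher} priority, and that is harmless --- after $i$ grows to $k+1$, every agent in $P$ with higher priority than $i$ already has size $k+1$, and equal sizes do not violate the weak monotonicity. Your following sentence states the correct resolution, so this is a wording glitch rather than a gap.
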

\begin{proof}
This result stems from the sequential nature of the allocation. 
Let $t$ be the iteration of the algorithm being examined.

Let $j$ be an agent with higher priority than $i$ with a bundle of size at least $k+2$. 
Consider the start of the iteration $t'$ where $W_j$ moved from a bundle of size $k+1$ to a bundle of size $k+2$. 
Since bundle sizes increase by at most $1$ at every iteration, there is a unique iteration where this event occurred. 
It must also be that $t' < t$.  To be selected on the iteration $t'$, $j$ must have been the agent with least utility $(k+1)$ and highest priority among the agents in $P$. However, we also have that at iteration $t'$, $i$'s bundle had a size of at most $k$ and it was in $P$ as well. 
This is because at iteration $t > t'$, $i$'s bundle had a size of $k$ and $i$ was in $P$; bundle sizes increase monotonically and agents removed from $P$ never get added back. This is a contradiction since it implies $j$ was not the agent with least utility among the agents in $P$ at iteration $t'$; therefore, $j$ cannot have a bundle of size $\ge k+2$.

The case where $j$ has a lower priority than $i$ is handled similarly. 
\end{proof}

We are now ready to prove our main result. 
This is done via a simple contradiction --- if Yankee Swap fails and there exists some agent $i$ with a higher value in the Lorenz dominating allocation, then there must have been a path from $i$ to $0$ that Yankee Swap somehow missed. 
However, showing that this path exists requires showing that Yankee Swap gets sequentially closer to the final solution. 
That is, at every round, the utility of every agent $i$ is no more than their utility under the prioritized Lorenz dominating allocation. 
This proof requires a careful combinatorial argument, and is shown in Lemma \ref{lem:lorenz-dominance}.

\begin{theorem}\label{thm:yankee-swap-leximin}
When agents have MRF valuations, Yankee Swap computes a non-redundant Lorenz dominating allocation with respect to the priority order $\pi$.
\end{theorem}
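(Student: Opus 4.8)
The plan is to prove the theorem by contradiction, using \cref{lem:babaioff-paths} together with a key invariant: at the beginning of every iteration of Yankee Swap, the current allocation $X$ is ``dominated coordinatewise'' by any prioritized Lorenz dominating allocation $Y$ in the sense that $v'_i(X_i) \le v'_i(Y_i)$ for every agent $i \in N$ that has not yet been removed from $P$ (and, suitably interpreted, for all agents). Let me first isolate that invariant as the main lemma. Fix a non-redundant Lorenz dominating allocation $Y$ for the augmented instance (this exists by \cref{thm:lorenz-fair} applied to $v'$, and by \cite[Theorem 4]{Babaioff2021Dichotomous} it is also Lorenz dominating for the original instance, so proving Yankee Swap outputs it suffices). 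The claim \texttt{Lorenz-dominance} I would prove is: at the start of every iteration, for every agent $i \in N$, if $i \in P$ then $v'_i(X_i) \le v'_i(Y_i)$, and if $i \notin P$ then $v'_i(X_i) = v'_i(Y_i)$ — equivalently (via non-redundancy and \cref{lem:yankee-swap-non-redundant}) $|X_i| \le |Y_i|$ for $i \in P$ and $|X_i| = |Y_i|$ for $i \notin P$, with the perturbations forcing the augmented inequalities in the same direction when $i$ has the same bundle size in $X$ and $Y$... but here is the subtlety: the perturbation comparison is not automatic, so the real content is that the agent removed from $P$ has exactly the right bundle size.

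To prove this invariant I would induct on iterations. The base case is trivial ($X_i = \emptyset$). For the inductive step there are two cases. If a transfer path from $i$ to $0$ is found, then only $v'_i(X_i)$ changes (it increases by $1$) and all other agents' bundle sizes are unchanged; I need to check $i$'s new value still satisfies $v'_i(X_i) \le v'_i(Y_i)$. Since before the step $|X_i| \le |Y_i|$ and $i$ had the strictly least augmented value among agents in $P$, and since — this is where \cref{lem:yankee-swap-size-upper-bounds} enters — the agents not in $P$ already have $|X_j| = |Y_j|$, one argues that if $|X_i|$ were already equal to $|Y_i|$ then $Y$ restricted to $P \cup (\text{finished agents})$ would contradict Lorenz dominance of $Y$ against the allocation obtained from $Y$ by... actually the cleanest route: apply \cref{lem:babaioff-paths} with the roles of $X$ and $Y$, get that if $|X_i| < |Y_i|$ for the chosen $i$ then a transfer path to some $k$ with $|X_k| > |Y_k|$ exists; but every finished agent has $|X_k| = |Y_k|$ and agent $0$ always has the maximal possible bundle so $k$ can be taken to be $0$ (or a path to $0$ can be extended), so the path the algorithm looks for does exist — hence the algorithm does not remove $i$, and when it does remove $i$ we must have had $|X_i| = |Y_i|$. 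The other case (no path found, $i$ removed from $P$) then just records $|X_i| = |Y_i|$, which is exactly what the invariant demands for finished agents.

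The hard part, and where I expect to spend the most care, is the interaction between the augmented perturbations and the size bounds of \cref{lem:yankee-swap-size-upper-bounds}: I need that when the algorithm removes $i$ from $P$, not only is $|X_i| = |Y_i|$, but also $i$'s perturbation is consistent with $Y$ — i.e. I must rule out the situation where $|X_i| = |Y_i|$ as integers but $v'_i(X_i) > v'_i(Y_i)$ would be needed, which cannot happen because equal bundle sizes give equal augmented values for the same agent. So the genuine obstacle is showing the path to $0$ exists whenever some $i \in P$ has $|X_i| < |Y_i|$: I must argue that one can always route the \cref{lem:babaioff-paths} transfer path so that it terminates at agent $0$ rather than at some other surplus agent $k \in S^+$. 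This follows because, by the invariant, the only agents with $|X_k| > |Y_k|$ are those with $k \notin P$ — but those have $|X_k| = |Y_k|$, contradiction — so in fact $S^+ = \{0\}$ is the only possibility, forcing every \cref{lem:babaioff-paths} path from such an $i$ to end at $0$. Once the invariant is established, the theorem is immediate: at termination $P = \emptyset$, so $|X_i| = |Y_i|$ for all $i \in N$, hence $X$ has the same sorted augmented utility vector as the Lorenz dominating allocation $Y$ and is therefore itself Lorenz dominating (and non-redundant by \cref{lem:yankee-swap-non-redundant}).
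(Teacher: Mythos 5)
Your overall architecture (a per-iteration invariant $|X_h|\le|Y_h|$ for $h\in P$ and $|X_h|=|Y_h|$ for $h\notin P$, maintained by induction against a fixed Lorenz dominating $Y$) is a legitimate reorganization of the paper's argument, and your handling of the removal case is sound: if the chosen agent $i$ had $|X_i|<|Y_i|$, then by the invariant no agent of $N$ lies in $S^+$ when \cref{lem:babaioff-paths} is applied to $X$ and $Y$, so $S^+=\{0\}$ and the path the algorithm searches for exists (this plays the role of the paper's construction of the truncated allocation $Z$). The termination step is also fine. But there is a genuine gap in the other inductive case, and it is exactly the case you start to argue and then abandon mid-sentence: when a transfer path from $i$ to $0$ \emph{is} found and executed, you must show the invariant is preserved, i.e.\ that the chosen agent satisfied $|X_i|<|Y_i|$ \emph{before} the transfer. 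The \cref{lem:babaioff-paths} argument you substitute only shows ``$|X_i|<|Y_i|\Rightarrow$ a path exists''; it says nothing about the converse direction you actually need, namely ``$|X_i|=|Y_i|\Rightarrow$ no path from $i$ to $0$ exists (given the invariant).'' Without ruling out this overshoot, the algorithm could in principle push the chosen agent past its $Y$-level, the invariant breaks, and the whole induction collapses. Your closing paragraph even identifies the wrong step as ``the genuine obstacle'': the path-existence direction is the easy half; the no-overshoot half is where all the combinatorial content lives.

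This missing half is not a formality — it is precisely where $Y$'s Lorenz dominance, the perturbations $\pi(i)/n^2$, the tie-breaking rule, and the balancedness of \cref{lem:yankee-swap-size-upper-bounds} must actually be used, which is why the paper devotes its \cref{lem:lorenz-dominance} (the sorted-utility-vector comparison of Figure 1) to the analogous statement rather than deducing it from a one-line exchange argument. A plausible repair within your framework: suppose $|X_i|=|Y_i|$ and a path from $i$ to $0$ exists; execute it to get $X'$ with $|X'_i|=|Y_i|+1$ and apply \cref{lem:babaioff-paths} to the pair $(Y,X')$. The resulting transfer path in $Y$ starts at $i$ and ends either at agent $0$ (contradicting that $Y$ is \MAXUSW) or at some $k\in P$ with $|Y_k|>|X_k|\ge|X_i|=|Y_i|$ (finished agents are excluded by the invariant); executing that path in $Y$ is then a progressive transfer from a strictly richer agent $k$ to $i$, and one must check — using that $i$ is the minimum-size agent in $P$ and, in the tied case $|X_k|=|X_i|$, that $i$ has the smaller perturbation — that the resulting allocation strictly improves some prefix sum of the sorted augmented utility vector, contradicting $Y$'s Lorenz dominance. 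That bookkeeping with the fractional parts is unavoidable and is absent from your proposal; as written, the proof does not go through.
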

\begin{proof}
% At every iteration of \cref{algo:yankee-swap}, the allocation remains non-redundant: by the definition of transfer paths, agents only ever take unassigned goods/steal goods that they have a positive marginal gain for. 

It is easy to see that the algorithm always terminates: at every iteration, an agent is removed from $P$ or $|X_0|$ reduces by $1$. 
Since no item is ever deallocated, and no agents ever return to $P$ once removed, the number of iterations is at most the size of $P$ plus $X_0$ on initialization, i.e. $n+m$. 

We now show that the allocation output by \cref{algo:yankee-swap} is Lorenz dominating with respect to the priority order $\pi$.

Assume for contradiction that the allocation output by \cref{algo:yankee-swap} is not Lorenz dominating with respect to the ordering $\pi$. Let $X$ be the allocation output by Algorithm \ref{algo:yankee-swap} and $Y$ be a Lorenz dominating allocation with respect to the ordering $\pi$ (recall that $Y$ is shown to always exist by \citet{Babaioff2021Dichotomous}). Note that $Y$ is also a Lorenz dominating allocation with respect to the augmented problem instance with priority order $\pi$.
% We use $Y'$ to denote an allocation for the augmented problem instance defined as $Y'_i = Y_i + a_i$ for all agents $i \in N$. We define $X'$ similarly as $X'_i = X_i + a_i$. Since $Y$ is Lorenz dominating with respect to the ordering $\pi$ for the original problem instance, $Y'$ is Lorenz dominating for the augmented problem instance \citep[Theorem 4]{Babaioff2021Dichotomous}.

If for all $i \in N$ $v_i(X_i) \ge v_i(Y_i)$ then since $Y$ is \MAXUSW, this means that $v_i(X_i) = v_i(Y_i)$ for all $i$ and $X$ is Lorenz dominating as well with respect to the ordering $\pi$, contrary to our assumption. 
Thus, there must be at least one $i\in N$ such that $v_i(X_i) < v_i(Y_i)$. 
Let $i \in \argmin \{v_i(X_i)\mid v_i(X_i) < v_i(Y_i)\}$. If there are multiple such agents, we pick the one with highest priority. Note that, by an argument similar to \cref{lem:choice-equivalence}, this is equivalent to saying $i$ is the least valued agent in the augmented problem instance such that $v'_i(X_i) < v'_i(Y_i)$.

Consider the iteration of the algorithm where $i$ was removed from $P$. Let $W$ be the non-redundant allocation at the beginning of the iteration (Lemma \ref{lem:yankee-swap-non-redundant}). We have the following Lemma.
\begin{lemma}\label{lem:lorenz-dominance}
For all $h \in N$, we have $|Y_h| \ge |W_h|$.
\end{lemma}
\begin{proof}
Assume for contradiction that this is not true. 
We show that $Y$ does not Lorenz dominate $W$ in the augmented problem instance, which contradicts the fact that $Y$ is Lorenz dominating. 
Let $j$ be the agent with lowest utility under $Y$ such that $|Y_j| < |W_j|$; we break ties by choosing the agent with highest priority. 
Again, since $W$ is non-redundant, this is equivalent to saying that $j$ is the agent with least utility in $Y$ with respect to the augmented problem instance such that $v'_j(Y_j) < v'_j(W_j)$.

Before we delve into the technical details, let us discuss the main idea of the proof. 
We examine the sorted utility vectors of $W$ and $Y$. We show that for any agent $h$ such that $v'_h(Y_h)< v'_j(Y_j)$, then $v'_h(Y_h) = v'_h(W_h)$. 
The `converse' also holds: for any agent $h$, if $v'_h(W_h) < v'_j(Y_j)$ then $v'_h(W_h) = v'_h(Y_h)$. 
We can compare the first index where the sorted utility vectors of $W$ and $Y$ differ; 
we then show that this element is greater in the sorted utility vector of $W$ than in $Y$, i.e. showing that $Y$ does not Lorenz dominate $W$, and yielding a contradiction. 
This is summarized in Figure \ref{fig:sorted-utility-vector}. 
Lastly, in this proof, we will be analyzing the augmented problem instance; unless specifically stated, any claims about preferences can be assumed to be with respect to the augmented valuations.

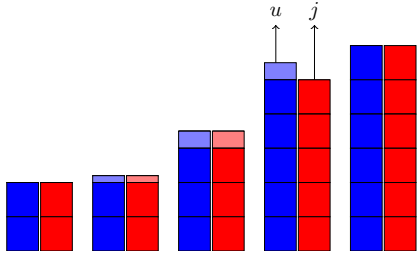
\begin{figure}
    \centering
    \scalebox{0.8}{%
    \begin{tikzpicture}
        \begin{axis} [ybar stacked,
    bar width = 15pt,
    ymin = 0,
    ymax = 10,
    ytick = \empty,
    xtick = \empty,
    bar shift = -8pt,
    hide axis
]
 
\addplot[fill=blue] coordinates {(1,1) (2,1) (3,1) (4,1)(5,1)};
\addplot[fill=blue] coordinates {(1,1) (2,1) (3,1) (4,1)(5,1)};
\addplot[fill=blue] coordinates {(1, 0)(2,0) (3,1) (4,1)(5,1)};
\addplot[fill=blue!50!white] coordinates {(1,0)(2,0.2)(3,0)(4,0)(5,0)};
\addplot[fill=blue] coordinates {(1, 0)(2, 0)(3,0)(4,1)(5,1)};
\addplot[fill=blue!50!white] coordinates {(1, 0)(2, 0)(3,0.5)(4,0)(5,0)};
\addplot[fill=blue] coordinates {(1, 0)(2, 0) (3, 0)(4,1)(5,1)};
\addplot[fill=blue] coordinates {(1, 0)(2,0)(3,0)(4,0)(5,1)};
\addplot[fill=blue!50!white] coordinates {(1, 0)(2,0)(3,0)(4,0.5)(5,0)}; \label{plot_one}
\node  at (axis cs: 3.75, 7){$u$};
\node  at (axis cs: 4.2, 7){$j$};

\draw[->] (axis cs: 3.75, 4.5) edge (axis cs: 3.75, 6.6) (axis cs: 4.2, 4.5) edge (axis cs: 4.2, 6.6);
\end{axis}

\begin{axis} [ybar stacked,
    bar width = 15pt,
    ymin = 0,
    ymax = 10,
    xtick = \empty,
    bar shift = 8pt,
    hide axis
]

\addplot[fill=red] coordinates {(1,1) (2,1) (3,1) (4,1)(5,1)};
\addplot[fill=red] coordinates {(1,1) (2,1) (3,1) (4,1)(5,1)};
\addplot[fill=red] coordinates {(1, 0)(2,0) (3,1) (4,1)(5,1)};
\addplot[fill=red!50!white] coordinates {(1, 0)(2,0.2) (3,0) (4,0)(5,0)};
\addplot[fill=red] coordinates {(1, 0)(2, 0)(3,0) (4,1)(5,1)};
\addplot[fill=red!50!white] coordinates {(1, 0)(2, 0)(3,0.5) (4,0)(5,0)};
\addplot[fill=red] coordinates {(1, 0)(2, 0) (3, 0)(4,1)(5,1)};
\addplot[fill=red] coordinates {(1, 0)(2,0)(3,0)(4,0)(5,1)};
\end{axis}

\end{tikzpicture}
}
    \caption{The idea behind the proof of \cref{lem:lorenz-dominance}: we plot the sorted utility vectors of $W$ (blue) and $Y$ (red). Each large block can be thought of as a utility of $1$ derived from a good and each of the smaller light colored blocks can be thought of as the augmented utilities. 
    We first show that all the values to the left of $j$ are equal in both $W$ and $Y$ and correspond to the same agent. We next show that the bar corresponding to $u$ is taller than that of $j$.}
    \label{fig:sorted-utility-vector}
\end{figure}

We first show that $v'_j(Y_j) < v'_i(W_i)$. 
If $j$ has a higher priority than $i$, $|W_j| \le |W_i|+1$ (\Cref{lem:yankee-swap-size-upper-bounds}). 
By our choice of $j$, $|Y_j| \le |W_j| - 1 \le |W_i|$, which implies that $v'_j(Y_j) < v'_i(W_i)$. 
If $j$ has a lower priority than $i$, $|W_j| \le |W_i|$ (\Cref{lem:yankee-swap-size-upper-bounds}). 
By our choice of $j$, $|Y_j| \le |W_j| - 1 \le |W_i| - 1 < |W_i|$. Since $|W_i|$ is greater than $|Y_j|$ by at least $1$, we have $v'_j(Y_j) < v'_i(W_i)$ irrespective of their priorities.

We observe that for any $h\in N$ whose utility is less than $v'_j(Y_j)$ under $Y$, we have $v'_h(Y_h) = v'_h(W_h)$. 
This is because we pick an agent $j$ with minimum utility under $Y$ for which $v'_j(Y_j) < v'_j(W_j)$. 
Therefore, for any agent $h$ for whom $v'_h(Y_h) <v'_j(Y_j)$ we must have that $v'_h(Y_h) \ge v'_h(W_h)$. 
If the inequality is strict, we have $v'_h(W_h) < v'_h(Y_h) < v'_j(Y_j) < v'_i(W_i)$. 
This implies that $h$ was removed from $P$ before $i$ was removed from $P$ i.e. at the iteration where $i$ was removed from $P$, $h$ was already removed from $P$; otherwise, according to \cref{lem:choice-equivalence}, \cref{algo:yankee-swap} would have chosen $h$ instead of $i$ creating a contradiction.
If $h$ was not in $P$ at the iteration in consideration, $h$ has a utility of $|W_h|$ when \cref{algo:yankee-swap} terminates, i.e. $|W_h| = |X_h|$.
This implies that $v'_h(X_h) = v'_h(W_h) < v'_i(W_i) \le v'_i(X_i)$ and $v'_h(X_h) < v'_h(Y_h)$. 
However, we assumed that agent $i$ is the lowest utility agent among those who have lower utility under $X$ than under $Y$, a contradiction.
We can similarly show that for any agent $h$ with utility less than $v'_j(Y_j)$ in $W$, we have $v'_h(Y_h) = v'_h(W_h)$.

Note that in the valuations $v'$, no two agents have the same value in any allocation due to the perturbation.
Let $j$ be the $\ell$-th least valued agent in $Y$. From our discussion above, the first $\ell-1$ least valued agents in both $W$ and $Y$ have the same utility in both allocations. Let the $\ell$-th least valued agent in $W$ be $u$. From the definition of Lorenz dominance, if $v'_j(Y_j) < v'_{u}(W_{u})$, $Y$ does not Lorenz dominate $W$. 

If $v'_j(Y_j) = v'_{u}(W_{u})$, this implies that $j = u$ since the fractional part of $v'_h(Y_h)$ is unique for every agent $h \in N$. However, if $j = u$, then by assumption we have $v'_j(Y_j) < v'_j(W_j) = v'_{u}(W_{u}) = v'_j(Y_j)$, a contradiction. 
Therefore $v'_j(Y_j) \ne v'_{u}(W_{u})$. 
If $v'_{u}(W_{u}) < v'_j(Y_j)$, then from our discussion, we must have $v'_{u}(W_{u}) = v'_{u}(Y_{u}) < v'_j(Y_j)$. 
This implies that there are $\ell$ agents (the first $\ell$ least valued agents in $W$) which have a lower utility than $j$ in $Y$, contradicting our assumption on $\ell$. 
Therefore, we must have $v'_{u}(W_{u}) > v'_j(Y_j)$ proving that $Y$ does not Lorenz dominate $W$; the sum of the first $\ell$ elements in the sorted utility vector of $W$ is greater than the sum of the first $\ell$ elements in the sorted utility vector of $Y$. 
Since $Y$ is Lorenz dominating in the augmented problem instance, this results in a contradiction and completes the proof.
\end{proof}

We construct a new allocation $Z$ starting at $Y$ and moving goods from agents in $N - i$ arbitrarily to $0$ till $|Z_j| = |W_j|$ for all $j \in N-i$. 
In other words, let $Z_j$ be a size $|W_j|$ subset of $Y_j$ for all $j \in N -i$ and $Z_i = Y_i$. 
Due to \cref{lem:lorenz-dominance}, there will be no $j \in N - i$ where $|Z_j| < |W_j|$. 
Given the dummy agent $0$'s valuation function, this allocation is still non-redundant for the agents $N + 0$. 
Note that $|W_i| < |Z_i|$ and $|W_0| > |Z_0|$. 

Invoking \cref{lem:babaioff-paths} on the allocations $W$ and $Z$, there must be a transfer path from $i$ to $0$ in $W$. 
This stems from the fact that $0$ is the only agent in $S^{+}$. Since $i$ has a valid transfer path, it could not have been removed from $P$ at the current iteration, a contradiction.
\end{proof}
In addition to proving that Yankee Swap outputs a Lorenz dominating allocation, \cref{thm:yankee-swap-leximin} shows that the resulting allocation is non-redundant. 
This is highly desirable in the course allocation setting; indeed, while some works require that algorithms leave no item unallocated, it is \emph{preferable} to have incomplete allocations in course allocation: if the resulting allocation is not non-redundant, then we have assigned students to classes that they either don't want to take, can't fit in their schedule, or they already have been signed up for. In any case, students will have to drop some of their assigned classes, creating additional administrative overheads for both management and the students themselves.

\subsection{Computing Path Transfers}\label{subsec:path-transfers}
The most computationally intensive aspect of \cref{algo:yankee-swap} is finding path transfers. 
We now provide a simple algorithm to compute path transfers. 
A general version of this method can be used to decide if a set is independent in a matroid union \citep{schrijver-book}. 
A similar approach is also used by \citet{Barman2021MRFMaxmin} whose notation we follow.

We define the {\em exchange graph} of an allocation $X$ as a directed graph $\cal G(X) = (G, E)$ on the set of goods $G$. 
If a good $g$ is in  $X_j$ for some agent $j \in N+0$, then an edge exists from $g$ to some other good $g' \notin X_j$ if $v_j(X_j - g + g') = v_j(X_j)$. 
In other words, there is an edge from $g$ to $g'$ if the agent who owns $g$ can replace it with $g'$ with no loss to their utility. In particular, if $g \in X_0$ then there is a directed edge from $g$ to any other good not in $X_0$. This is because agent $0$, who represents the unassigned items, has an additive utility over all items.
We observe that if the good $g$ is stolen from agent $j$, then that good can be replaced with $g'$ iff the edge $(g,g')$ exists.  

In order to check if a transfer path exists from some agent $i$ to $0$, we construct the exchange graph $\cal G(X)$. 
We then compute the set of goods which have a marginal gain of $1$ for the agent $i$ under the allocation $X$ i.e. $F_i(X) = \{g \in G \mid \Delta_i(X_i, g) = 1\}$. 
In the exchange graph, we find the shortest path (if there exists one) from $F_i(X)$ to goods in $X_0$; this can done by adding a source node $s$ in the exchange graph with edges to all the goods in $F_i(X)$ and then using breadth-first search (BFS) to find the shortest path from $s$ to $X_0$. 
From the path in the exchange graph, we can also determine exactly which goods to transfer along the path to update the allocation --- if we have a path $(s, g_1, g_2, \dots, g_k)$, we transfer the good $g_k$ to the agent who has $g_{k-1}$, transfer $g_{k-1}$ to the agent who has $g_{k-2}$ and so on. Finally, we give $g_1$ to $i$; see \cref{algo:transfer}. 
% This transfer of goods along the shortest path $P$ in the exchange graph is commonly referred to as {\em path augmentation} operation and is defined as follows: for all agents $i \in N+0$

\begin{algorithm}
    \caption{Computing Transfer Paths}
    \label{algo:transfer}
    \begin{algorithmic}
        \Require An allocation $X$ and two agents $i$ and $j$
        \State Construct the exchange graph $\cal G(X)$
        \State Add a source node $s$ to $\cal G(X)$ with an edge to all the goods in $F_i(X)$
        \State Find the shortest path from $s$ to $X_j$ using breadth-first search in $\cal G(X)$
        \If{A path exists}
            \State Return the path $(s, g_{i_1}, g_{i_2}, \dots, g_{i_k})$
        \Else
            \State Return \false
        \EndIf
    \end{algorithmic}
\end{algorithm}

That the shortest path in the exchange graph from $F_i(X)$ to $X_0$ is a transfer path is well known in the matroid literature (albeit using different terminology). This result was first adapted by \citet[Lemma 1]{Barman2021MRFMaxmin} to fair allocation. The missing proofs in this section can be found in the full version of the paper.
% We present the statement here and the proof in the appendix for completeness.

\begin{restatable}{lemma}{lempathtransfer}\label{lem:path-transfer-equivalence}
Given a non-redundant allocation $X$, the shortest path from $F_i(X)$ to $X_j$ in the exchange graph $\cal G(X)$ is a transfer path from $i$ to $j$ ($i \ne j$ and $i, j \in N+ 0$) in $X$. 
i.e. transferring goods along the shortest path results in an allocation where $i$'s value for its bundle goes up by $1$, $j$'s value for its bundle goes down by $1$ and all the other agents see no change in the value of their bundles.
\end{restatable}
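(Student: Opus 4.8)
The plan is to argue directly on the path returned by \cref{algo:transfer}. Fix that path, $(s,g_1,\dots,g_k)$, so $g_1\in F_i(X)$ (hence $\Delta_i(X_i,g_1)=1$ and $g_1\notin X_i$), $g_k\in X_j$, and $(g_1,\dots,g_k)$ is a shortest such path in $\cal G(X)$; for each $\ell$ let $a(\ell)\in N+0$ be the owner of $g_\ell$ under $X$, so $a(k)=j$. The transfer reassigns $g_{\ell+1}$ to $a(\ell)$ for $\ell=1,\dots,k-1$ and gives $g_1$ to $i$, while $g_k$ leaves $j$. First I would record the structural consequences of minimality: (i) the $g_\ell$ are distinct and consecutive ones have distinct owners, since an edge $(g_\ell,g_{\ell+1})$ forces $g_{\ell+1}\notin X_{a(\ell)}$; (ii) $g_\ell\notin F_i(X)$ for every $\ell\ge2$ and $g_\ell\notin X_j$ for every $\ell\le k-1$ — otherwise $(g_\ell,\dots,g_k)$ or $(g_1,\dots,g_\ell)$ would be a strictly shorter $F_i(X)$--$X_j$ path — so in particular $j$ owns only $g_k$ among the path goods; and (iii) there is no ``forward chord'', i.e.\ no edge $(g_\ell,g_{\ell'})$ of $\cal G(X)$ with $\ell'>\ell+1$. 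The degenerate case $k=1$ (where $g_1\in F_i(X)\cap X_j$) is immediate from $\Delta_i(X_i,g_1)=1$ and non-redundancy of $X_j$, so assume $k\ge2$.

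Next I would reduce the claim to a pure size/independence statement. Let $X'$ be the post-transfer allocation. Using (i) and the facts that each added good $g_{\ell+1}$ lies outside $X_{a(\ell)}$ and $g_1\notin X_i$, a direct bookkeeping count of which goods enter and leave each bundle gives $|X'_i|=|X_i|+1$, $|X'_j|=|X_j|-1$, and $|X'_h|=|X_h|$ for every other $h\in N+0$. Hence it suffices to show that $X'$ is again non-redundant, i.e.\ $v_h(X'_h)=|X'_h|$ — equivalently that $X'_h$ is independent in the matroid $\cal M_h$ with rank function $v_h$ — for every $h$; the asserted value changes then follow from non-redundancy. For $h\notin\{i\}\cup\{a(\ell):\ell\}$ we have $X'_h=X_h$; for $h=j$, by (ii) $X'_j=X_j-g_k\subseteq X_j$ is independent; and if $i$ is not among the owners $a(\ell)$, then $X'_i=X_i+g_1$ is independent because $g_1\in F_i(X)$.

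The remaining and main case is an agent $h$ appearing several times on the path, at positions $\ell_1<\dots<\ell_p$: its new bundle is $X_h-\{g_{\ell_1},\dots,g_{\ell_p}\}+\{g_{\ell_1+1},\dots,g_{\ell_p+1}\}$, with the conventions that if $h=j$ the top position $\ell_p=k$ contributes only the removal, and if $h=i$ the good $g_1$ is additionally inserted. The engine here is the classical matroid multi-exchange lemma: if $I$ is independent in a matroid, $\{a_q\to b_q\}_q$ are swaps with each $I-a_q+b_q$ independent and the bipartite exchange graph between $\{a_q\}$ and $\{b_q\}$ has a unique perfect matching, then $I-\{a_q\}_q+\{b_q\}_q$ is independent. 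I would apply it with $I=X_h$, $a_q=g_{\ell_q}$, $b_q=g_{\ell_q+1}$: each swap is valid because $(g_{\ell_q},g_{\ell_q+1})$ is an edge of $\cal G(X)$, and by (iii) there is no edge $(g_{\ell_q},g_{\ell_{q'}+1})$ with $q'>q$, so $a_q$ can only be matched to some $b_{q'}$ with $q'\le q$; this ``triangular'' structure, together with the fact that the diagonal matching $a_q\leftrightarrow b_q$ is present, forces the perfect matching to be unique. For $h=j$ one first restricts to the $p-1$ genuine swaps and then deletes $g_k$. The genuinely delicate point, which I expect to be the main obstacle, is $h=i$: one must further show that inserting $g_1$ into the already-swapped bundle keeps it independent, i.e.\ $\Delta_i(X_i-\{g_{\ell_q}\}_q+\{g_{\ell_q+1}\}_q,\,g_1)=1$. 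I would obtain this from submodularity of $v_i$ (which forces $\Delta_i(X_i-\{g_{\ell_q}\}_q,g_1)=1$) combined with minimality of the path, which must be used to exclude the configuration that would drop this marginal to $0$. Once that is settled, assembling $v_i(X'_i)=|X_i|+1$, $v_j(X'_j)=|X_j|-1$, and $v_h(X'_h)=|X_h|$ for all other $h$ finishes the proof.

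As a sanity check and an alternative, I would note that this whole statement is the specialization to fair division of the well-known augmenting-path characterization for matroid union: translating allocations into independent sets of the union of the matroids $\cal M_h$ and quoting the standard theorem of \citet{schrijver-book} (adapted to allocations by \citet{Barman2021MRFMaxmin}) yields the result as a black box, with all of the case analysis above subsumed. I would present the self-contained argument but mention this route as the conceptual explanation of why shortest paths, and not arbitrary paths, are the right objects.
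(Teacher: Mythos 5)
Your ``alternative'' route is in fact the paper's actual proof: the appendix establishes \cref{lem:path-transfer-equivalence} purely by translating Lemma~1 of \citet{Barman2021MRFMaxmin} into the paper's notation (with the remark that the dummy agent $0$ may be included since $v_0$ is itself an MRF). So the substance of your submission is the self-contained argument, and that argument has one genuine gap, sitting exactly where you flagged it: the case $h=i$, i.e.\ showing that $g_1$ can be inserted into the already-swapped bundle. Submodularity does give $\Delta_i\bigl(X_i-\{g_{\ell_q}\}_q,\,g_1\bigr)=1$, but the marginal you need is with respect to $X_i-\{g_{\ell_q}\}_q+\{g_{\ell_q+1}\}_q$, and the \emph{added} goods $g_{\ell_q+1}$ could a priori destroy it; writing that minimality ``must be used to exclude the configuration'' names the obstacle without overcoming it, and since this is precisely the point where the source set $F_i(X)$ and shortest-ness interact, the lemma is not proved as written.

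The step can be closed using only your own observation (ii), via a span (closure) argument in the matroid $M_i$ whose rank function is $v_i$. Because the path is shortest, no $g_\ell$ with $\ell\ge 2$ lies in $F_i(X)$; hence every good $g_{\ell_q+1}$ added to $i$'s bundle satisfies $\Delta_i(X_i,g_{\ell_q+1})=0$, i.e.\ lies in the span of $X_i$. Therefore $X_i\cup\{g_{\ell_q+1}\}_q$ has rank $|X_i|$, and the swapped set $X_i-\{g_{\ell_q}\}_q+\{g_{\ell_q+1}\}_q$ --- which your unique-perfect-matching step shows is independent and has size $|X_i|$ --- is a basis of this union, so it spans the same flat as $X_i$. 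Since $g_1\in F_i(X)$ means $g_1$ is outside that flat, adding it preserves independence, giving $v_i(X'_i)=|X_i|+1$. With this inserted, the rest of your argument is sound: the bipartite exchange edges fed to the unique-matching lemma coincide with exchange-graph edges (each $g_{\ell_q+1}\notin X_h$), the no-shortcut property gives the triangular structure and hence uniqueness of the perfect matching, $j$ owns only $g_k$ among the path goods so $X'_j=X_j-g_k$ is handled by downward closure (your later remark about ``$p-1$ genuine swaps'' for $j$ is superfluous given this), and the size bookkeeping converts non-redundancy of $X'$ into the stated utility changes. Note that this completed argument is essentially the classical augmenting-path proof for matroid union, which is why the paper chose to import it as a black box rather than reprove it.
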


\citet[Lemma 5]{Barman2021MRFMaxmin} also show that a result similar to \cref{lem:babaioff-paths} is true for paths in the exchange graph as well.
\begin{restatable}{theorem}{thmexchangegraph}\label{thm:exchange-path}
Let $X$ and $Y$ be two non-redundant allocations for the set of agents $N + 0$. Let $S^{-}$ be the set of all agents $i \in N + 0$ where $|X_i| < |Y_i|$, $S^{=}$ be the set of all agents $i \in N + 0$ where $|X_i| = |Y_i|$ and $S^{+}$ be the set of all agents $i \in N + 0$ where $|X_i| > |Y_i|$.
For any agent $i \in S^{-}$, there exists a path in the exchange graph from $F_i(X)$ to $X_k$ for some $k \in S^{+}$.
\end{restatable}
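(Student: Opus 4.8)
The statement is precisely the exchange-graph analogue of \cref{lem:babaioff-paths}, and the natural plan is to bootstrap from that lemma together with the correspondence between transfer paths and exchange-graph paths underlying \cref{lem:path-transfer-equivalence}.

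\emph{Main route (constructive).} First apply \cref{lem:babaioff-paths} to $X$ and $Y$: it yields a transfer path from $i$ to some $k \in S^{+}$ in $X$. Using the good-sequence characterisation of transfer paths, this path comes with pairwise distinct goods $(g_{1}, g_{2}, \dots, g_{\ell})$ and an agent sequence $(i = q_{0}, q_{1}, \dots, q_{\ell-1}, q_{\ell} = k)$ such that, for every $t$, the good $g_{t}$ is moved from $q_{t}$ to $q_{t-1}$. The plan is to show that $(g_{1}, \dots, g_{\ell})$ is a walk from $F_i(X)$ to $X_{k}$ in $\cal{G}(X)$: the first good satisfies $\Delta_{i}(X_{i}, g_{1}) = 1$, so $g_{1} \in F_i(X)$; the last satisfies $g_{\ell} \in X_{k}$ with $k \in S^{+}$; and for each intermediate $t$, the transfer-path conditions, together with non-redundancy of $X$, give $g_{t} \in X_{q_{t}}$, $g_{t+1} \notin X_{q_{t}}$, and $v_{q_{t}}(X_{q_{t}} - g_{t} + g_{t+1}) = v_{q_{t}}(X_{q_{t}})$, which is exactly the condition for the edge $(g_{t}, g_{t+1})$ to be present in $\cal{G}(X)$. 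From this walk one extracts a simple path from $F_i(X)$ to $X_{k}$, completing the argument.

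\emph{The main obstacle.} The delicate point is that transfer paths may revisit agents (they are permitted to be cyclic), and then the intermediate bundle of $q_{t}$ at the moment $g_{t+1}$ is selected inside $\LOOP$ is no longer $X_{q_{t}} - g_{t}$: $q_{t}$ may already have surrendered other original goods and received other goods, so the identity $v_{q_{t}}(X_{q_{t}} - g_{t} + g_{t+1}) = v_{q_{t}}(X_{q_{t}})$ does not follow from a one-line submodularity argument. I would resolve this by working throughout with a transfer path that moves as few goods as possible (equivalently, a shortest one): one shows that along such a path no agent surrenders one of its original goods twice and no received good is ever passed on again, so $q_{t}$'s intermediate bundle is exactly $X_{q_{t}}$ with its already-surrendered original goods deleted and its already-received goods added; non-redundancy of $X$ then lets me sandwich a set between $X_{q_{t}} - g_{t}$ and that intermediate bundle and pull the marginal-value equality back to the single swap $g_{t} \leftrightarrow g_{t+1}$ using submodularity. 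Proving the ``minimality $\Rightarrow$ no repeated surrender'' claim, and checking that it really pins down the intermediate bundle, is where the real work lies.

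\emph{Alternative route.} One can instead argue by contradiction in the textbook matroid-union style: assuming no such exchange-graph path exists, let $R$ be the set of goods reachable from $F_i(X)$ in $\cal{G}(X)$, so that $F_i(X) \subseteq R$ while $X_{k} \cap R = \emptyset$ for every $k \in S^{+}$. The reachability of $R$ translates (via the fundamental-circuit fact already behind \cref{lem:path-transfer-equivalence}) into a matroid-exchange property of $G \setminus R$ that certifies that $|X_{i}|$ is already as large as it can be without shrinking any $|X_{k}|$ with $k \notin S^{+}$; combined with non-redundancy of $X$ and $Y$ and the identity $\sum_{j \in N+0} |X_{j}| = \sum_{j \in N+0} |Y_{j}| = m$, this forces $|X_{i}| \ge |Y_{i}|$, contradicting $i \in S^{-}$. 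Here the obstacle is stating the exchange property with exactly the right localisation --- for agents $j \ne i$ one cannot expect $X_{j} \cap R$ to span all of $R$ in $v_{j}$, since $F_{j}(X)$ need not lie in $R$ --- and arranging the final count so that the scarcity of the $S^{+}$-agents' goods inside $R$ delivers the contradiction.
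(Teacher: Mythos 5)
Your main (constructive) route has a genuine gap, and it sits exactly where you say the real work lies. The rescue lemma you propose --- that a transfer path moving a minimum number of goods never takes two of the same agent's original goods --- is false. Take agents $i,a,b$ and goods $g_1,\dots,g_4$: let $v_a$ be the graphic-matroid rank on a triangle with $g_1=AB$, $g_2=AC$ and $g_3,g_4$ two parallel copies of $BC$; let $v_b$ be the rank-one matroid whose non-loops are $\{g_2,g_3\}$; let $v_i$ value only $g_1$; and let $X_a=\{g_1,g_3\}$, $X_b=\{g_2\}$, $X_0=\{g_4\}$. The only transfer path from $i$ to $0$ is $(i,a,b,a,0)$, moving $g_1,g_2,g_3,g_4$, so even the unique (hence minimal) path revisits $a$ and takes both of $a$'s original goods. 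Worse, once revisits occur the goods moved along a transfer path need not form a walk in $\mathcal{G}(X)$ at all, so no bookkeeping over the same good sequence can close the gap: let $v_a$ be the linear-matroid rank with $x_1=e_1$, $x_2=e_2$, $x_3=e_3$, $y_1=e_1+e_2$, $y_2=e_1+e_3$ and $X_a=\{x_1,x_2,x_3\}$; a legitimate transfer path (embedded as in the previous example) may have $a$ surrender $x_1$ and receive $y_1$ (as $\{x_2,x_3,y_1\}$ is independent) and later surrender $x_2$ and receive $y_2$ (as $\{x_3,y_1,y_2\}$ is independent), yet the edge $(x_2,y_2)$ is absent from $\mathcal{G}(X)$ because $\{x_1,x_3,y_2\}$ is dependent. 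The submodular ``sandwich'' cannot be set up because the intermediate bundle is not a superset of $X_{q_t}-g_t$; the theorem remains true in such instances only because some \emph{other} exchange-graph path exists, which is exactly why reusing the transfer path's own goods is the wrong move.

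Your alternative route is the right one --- it is essentially the paper's proof --- but as sketched it stops short of the step that does the work, and your proposed closing count is not the one that closes it. The actual argument: let $R$ be the set of goods reachable from $F_i(X)$ in $\mathcal{G}(X)$ and let $B$ consist of $i$ together with all agents owning a good of $R$. The localisation you worry about is handled by the observation that $F_k(X)\subseteq R$ for every $k\in B$: if $g\in X_k\cap R$ and $g'\in F_k(X)$, then $v_k(X_k+g')=|X_k|+1$ and ranks drop by at most one per removed element, so $v_k(X_k-g+g')=|X_k|=v_k(X_k)$ and the edge $(g,g')$ exists. If no reachable good belongs to an agent of $S^{+}$, then $B\subseteq S^{-}\cup S^{=}$ with $i\in S^{-}$, giving $\sum_{k\in B}|X_k|<\sum_{k\in B}|Y_k|\le r_{M_B}(G)$, where $M_B$ is the union of the matroids of the agents in $B$ (the sets $Y_k$, $k \in B$, are disjoint independent sets, so their union is independent in $M_B$). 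The non-elementary ingredient is the matroid-union augmentation criterion (\cref{lem:schriver-lemma}, i.e., Theorem 42.4 of \citet{schrijver-book}, imported via Lemma~4 of \citet{Barman2021MRFMaxmin}): applied to the allocation restricted to $B$, it yields a path in $\mathcal{G}(X)$ from $\bigcup_{k\in B}F_k(X)\subseteq R$ to a good outside $\bigcup_{k\in B}X_k$, hence outside $R$, contradicting the closure of $R$ under reachability. Your step ``$\sum_j|X_j|=\sum_j|Y_j|=m$ forces $|X_i|\ge|Y_i|$'' does not materialise without this union-rank comparison restricted to $B$; as written, both of your routes leave the decisive step unproved.
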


% Note that \cref{thm:exchange-path} essentially implies that we can replace the term transfer paths in \cref{algo:yankee-swap} with paths in the exchange graph and still output Lorenz dominating allocations. This is because our only use of transfer paths in the proof of \cref{thm:yankee-swap-leximin} is via \cref{lem:babaioff-paths} --- the transfer path analog of \cref{thm:exchange-path}.

Armed with these two results, we are ready to show correctness i.e. a path exists from agent $F_i(X)$ to $X_j$ in the exchange graph iff a transfer path exists from $i$ to $j$ in the allocation $X$. 

\begin{theorem}\label{thm:path-correctness}
Given a non-redundant allocation $X$, a transfer path exists from agent $i$ to agent $j$ in $X$ if and only if \cref{algo:transfer} outputs a path. Furthermore, the path output by \cref{algo:transfer} is a transfer path.
\end{theorem}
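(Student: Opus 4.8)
The plan is to prove the two directions of the biconditional separately, relying almost entirely on the two results just imported: \cref{lem:path-transfer-equivalence} (a shortest $F_i(X)$--$X_j$ path in the exchange graph is a transfer path) and \cref{thm:exchange-path} (the exchange-graph analogue of \cref{lem:babaioff-paths}). For the ``if'' direction together with the ``furthermore'' clause, suppose \cref{algo:transfer} returns a path. By construction this is a shortest path from the source $s$ --- equivalently from $F_i(X)$ --- to $X_j$ in $\cal G(X)$. \cref{lem:path-transfer-equivalence} states exactly that transferring goods along such a path produces an allocation in which $i$'s utility increases by one, $j$'s decreases by one, and every other agent is unaffected; this is precisely the definition of a transfer path from $i$ to $j$. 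Hence the output is a transfer path, and in particular a transfer path from $i$ to $j$ exists.

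For the ``only if'' direction, suppose a transfer path from $i$ to $j$ exists in $X$, and execute it to obtain an allocation $Y$. By the definition of a transfer path, $|Y_i| = |X_i| + 1$, $|Y_j| = |X_j| - 1$, and $|Y_h| = |X_h|$ for every other $h \in N+0$; moreover $Y$ is non-redundant, since a transfer path only moves goods that have marginal value $1$ for the agent receiving them, so $v_h(Y_h) = |Y_h|$ for all $h$. Comparing $X$ and $Y$ in the language of \cref{thm:exchange-path}, the set $S^{-}$ contains $i$ and the set $S^{+}$ equals $\{j\}$ (it is nonempty and contains no other agent). \cref{thm:exchange-path} then yields a path in $\cal G(X)$ from $F_i(X)$ to $X_k$ for some $k \in S^{+}$, i.e.\ a path from $F_i(X)$ to $X_j$. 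Consequently the breadth-first search in \cref{algo:transfer} from $s$ to $X_j$ succeeds, and the algorithm returns a path.

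The argument is essentially bookkeeping, so there is no deep obstacle; the one point that needs care is verifying that the allocation $Y$ obtained by running the hypothesized transfer path is non-redundant, since \cref{thm:exchange-path} is stated only for non-redundant allocations. This is immediate from clause (a) of the definition of a transfer path, which guarantees each transferred good has marginal value $1$ for its new owner, combined with the fact (noted in \cref{sec:prelims}) that for clean allocations $v_h(Y_h) = |Y_h|$. A secondary detail is to observe that $S^{+}$ is nonempty and singleton so that \cref{thm:exchange-path} applies with the conclusion landing at $X_j$ specifically rather than at some other agent's bundle; this follows directly from the size identities listed above.
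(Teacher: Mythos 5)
Your proposal is correct and follows essentially the same route as the paper's own proof: \cref{lem:path-transfer-equivalence} handles the ``if'' direction together with the ``furthermore'' clause, and \cref{thm:exchange-path} applied to $X$ and the allocation $Y$ obtained by executing the hypothesized transfer path (with $S^{+}=\{j\}$) handles the ``only if'' direction. Your extra remark verifying that $Y$ is non-redundant is a detail the paper states without elaboration, so it is a welcome but not divergent addition.
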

\begin{proof}
The second statement is implied by \cref{lem:path-transfer-equivalence} so we only prove the first statement. 

$(\Rightarrow)$ Assume \cref{algo:transfer} outputs a path. This implies there exists a path from $F_i(X)$ to $X_j$. From \cref{lem:path-transfer-equivalence}, this implies that there is a transfer path from $i$ to $j$

$(\Leftarrow)$ Assume there is a transfer path from $i$ to $j$ in $X$. Let $Y$ be the non-redundant allocation that arises from transferring goods along the transfer path. Apply \cref{thm:exchange-path} to allocations $X$ and $Y$ and the agent $i$. Agent $i$ is clearly in $S^{-}$ and the only agent in $S^{+}$ is $j$ from the definition of the transfer path. Therefore, there must exist a path from $F_i(X)$ to $X_j$ in $\cal G(X)$. This implies that \cref{algo:transfer} outputs a path.
\end{proof}

\subsection{Time Complexity Analysis}
We now analyze the time complexity of \cref{algo:yankee-swap}. 
We represent the allocation $X$ as a binary matrix where $X(i, g) = 1$ if and only if $g \in X_i$. 
Thus, checking if a good belongs to an agent, adding a good to a bundle, and removing a good from a bundle can be done in $O(1)$ time. 

Our time complexity analysis is based on two simple observations. First, the loop in \cref{algo:yankee-swap} runs at most $n+m$ times. 
This is because at each round, we either reduce the size of $X_0$ by $1$ or remove some agent from the game --- the former happens at most $m$ times and the latter happens at most $n$ times; once an agent is removed, they do not return.

Second, \cref{algo:transfer} runs in $O(m^2(n + \tau))$ time (where again, $\tau$ is the maximum time to compute $v_i(S)$). This can be seen by closely examining each step of \cref{algo:transfer}. 
We can construct the exchange graph by examining each possible pair of goods $(g, g')$, finding the agent $i$ whose bundle contains $g$, and checking if $v_i(X_i - g + g') = v_i(X_i)$. This can trivially be done in $O(m^2 (n + \tau))$ time. 
This is the most expensive step of the algorithm. 
% All the other steps can be done more quickly.

Adding a source node and the required edges takes $O(m \tau)$ time; we only need to check if $\Delta_i(X_i, g) = 1$ for each good $g$. Running BFS and finding the shortest path takes $O(m^2)$ time since there are $O(m)$ nodes in the graph. 
Executing path transfers takes $O(nm)$ time --- we iterate through the path and transfer each good to the owner of the previous good; finding the owner takes $O(n)$ time. 

Combining the two observations, we have the following result.
\begin{theorem}\label{thm:runtime}
    \cref{algo:yankee-swap} runs in $O(m^2(n + \tau)(m+n))$ time.
\end{theorem}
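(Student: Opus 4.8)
The plan is to prove the bound by multiplying two independent estimates: an upper bound on the number of iterations of the main \texttt{while} loop of \cref{algo:yankee-swap}, and an upper bound on the cost of a single iteration, whose bottleneck is one call to \cref{algo:transfer}. Throughout, I would fix the data structure: store $X$ as an $(n+1)\times m$ binary matrix $X(i,g)$, so that testing $g \in X_i$, inserting a good into a bundle, and deleting a good from a bundle all take $O(1)$ time, while locating the (unique) owner of a good takes $O(n)$ time by scanning a column.

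\textbf{Counting iterations.} At every iteration exactly one of two things happens. Either \cref{algo:transfer} returns a path, in which case the goods are transferred and, because the last good of the path leaves agent $0$ while no good is ever returned to $X_0$, the quantity $|X_0|$ strictly decreases; or no path is returned and the chosen agent $i$ is removed from $P$. Since $|X_0| = m$ at initialization and $|X_0|$ never increases, the first event occurs at most $m$ times; since $|P| = n$ at initialization and an agent removed from $P$ never re-enters it, the second event occurs at most $n$ times. Hence the loop runs at most $n+m$ times. Non-redundancy is maintained throughout by \cref{lem:yankee-swap-non-redundant}, so the path execution behaves as described and these monotonicity claims are legitimate.

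\textbf{Cost of one iteration.} Lines~4--5 of \cref{algo:yankee-swap} (computing $P'$ and its highest-priority member) cost $O(n)$; updating the allocation along a returned path of $O(m)$ goods costs $O(nm)$, since each transfer requires locating a good's current owner. The remaining, dominant cost is one invocation of \cref{algo:transfer}, which I would account line by line. Building the exchange graph $\cal G(X)$ requires, for each of the $O(m^2)$ ordered pairs $(g,g')$ of distinct goods, locating the owner $j$ of $g$ in $O(n)$ time and then testing $v_j(X_j - g + g') = v_j(X_j)$ with two oracle calls and $O(1)$ bundle edits, for a total of $O(m^2(n+\tau))$. Adding the source $s$ with edges into $F_i(X)$ costs $O(m\tau)$ (one pair of oracle calls per good). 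Running breadth-first search from $s$ on a graph with $O(m)$ vertices and $O(m^2)$ edges costs $O(m^2)$. Thus \cref{algo:transfer} runs in $O(m^2(n+\tau))$ time, and this term dominates the per-iteration cost.

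Multiplying the $O(n+m)$ bound on the number of iterations by the $O(m^2(n+\tau))$ per-iteration bound yields the claimed $O(m^2(n+\tau)(m+n))$ running time. The argument is essentially bookkeeping; the only points that deserve care are (a) making the two monotonicity claims in the iteration count airtight and (b) confirming that the exchange-graph construction --- rather than the BFS or the path update --- is genuinely the bottleneck, which relies on charging each edge test to a constant number of valuation-oracle calls and on the $O(1)$ matrix operations. One could ask whether the exchange graph might be maintained incrementally across iterations to save time; since a single transfer can flip many edges we conservatively rebuild it each round, which does not change the stated asymptotic bound. I do not anticipate any genuine obstacle beyond this accounting.
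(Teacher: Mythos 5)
Your proposal is correct and matches the paper's own argument essentially step for step: the loop runs at most $n+m$ times (each iteration either shrinks $X_0$ or permanently removes an agent from $P$), and each iteration is dominated by the $O(m^2(n+\tau))$ cost of building the exchange graph in \cref{algo:transfer}, with the same binary-matrix representation and the same accounting for the source edges, BFS, and path updates. Multiplying the two bounds gives the stated runtime, exactly as in the paper.
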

A practical application of Yankee Swap in course allocation can likely leverage structural properties of the problem. For example, while real-world instances contain thousands of students and courses, each individual student is typically assigned no more than 5--6 classes. This sparsity most likely allows for more compact representations and further reductions in running time. We leave this analysis to future work
\subsection{Comparison to \citet{Babaioff2021Dichotomous}}\label{subsec:comparison}
The algorithm to compute Lorenz dominating allocations by \citet{Babaioff2021Dichotomous} works as follows: starting with a \MAXUSW{} allocation, repeatedly check if there exists a path transfer that improves the objective function $\sum_{i \in N} \left (v_i(X_i) + \frac{\pi(i)}{n^2}\right )^2$. This is similar to the technique used by \citet{benabbou2021MRF}, who maximize $\sum_{i \in N}v_i(X_i)^2$ via single-item transfers, rather than path transfers.
\citeauthor{Babaioff2021Dichotomous} check for and compute path transfers via matroid intersection algorithms. 
Specifically, they use algorithms for the matroid intersection problem to compute a \MAXUSW{} allocation for the modified problem instance where each agent $i$'s valuation function is upper bounded by some predefined value $k_i$ i.e. $v_i^{\texttt{new}}(S) = \max\{v_i(S), k_i\}$. 
Note that computing the value of a bundle with respect to this modified valuation function still takes $O(\tau)$ time; therefore, from a time complexity perspective, computing a \MAXUSW{} allocation for the valuation profile $v$ is equivalent to computing a \MAXUSW{} allocation for the valuation profile $v^{\texttt{new}}$.

We first analyze the time complexity of computing a \MAXUSW{} allocation and then use it to analyze the time complexity of the algorithm by \citet{Babaioff2021Dichotomous}.  Our analysis uses the matroid intersection algorithm by \citet{Chakrabarty2019MatroidIntersection}, the state-of-the-art algorithm for the matroid intersection problem with access to a rank oracle. The proof can be found in the full version of the paper.

\begin{restatable}{lemma}{lemmatintersection}\label{lem:mat-intersection-time}
When agents have MRF valuations, computing a \MAXUSW{} allocation takes $O(n^2 m^{3/2} (m+\tau) \log{nm})$ time using the matroid intersection problem.
\end{restatable}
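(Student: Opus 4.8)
The plan is to reduce the computation of a \MAXUSW{} allocation to a single instance of matroid intersection and then to invoke the rank-oracle algorithm of \citet{Chakrabarty2019MatroidIntersection}. First I would set up the reduction. Work on the ground set $\Omega = N \times G$ of agent--good pairs, so $|\Omega| = nm$, and define two matroids on $\Omega$. The first, $\mathcal{M}_1$, is the direct sum of the agents' matroids: writing $S_i = \{g : (i,g)\in S\}$, a set $S \subseteq \Omega$ is independent iff $v_i(S_i) = |S_i|$ for every agent $i$. The second, $\mathcal{M}_2$, is the partition matroid in which $S$ is independent iff each good $g$ occurs in at most one pair of $S$. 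A set independent in both matroids is exactly the incidence set of a non-redundant partial allocation $X$, and its size equals $\sum_{i} v_i(X_i) = \USW(X)$; conversely, from any allocation $X$ one obtains a common independent set of size $\USW(X)$ by replacing each $X_i$ with a maximal independent subset of it, whose value is $v_i(X_i)$ by the non-redundancy characterisation recalled in \cref{sec:prelims}. Hence a maximum common independent set has size exactly $\MAXUSW$ (equivalently, this is the rank of the matroid union $\bigvee_{i \in N} M_i$), and the allocation is read off the set directly; so it remains only to bound the running time of matroid intersection on $(\Omega, \mathcal{M}_1, \mathcal{M}_2)$.

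Next I would bound the cost of a single rank-oracle call on this instance. The rank of a set $S$ in $\mathcal{M}_1$ equals $\sum_{i \in N} v_i(S_i)$, which is computed with $n$ valuation-oracle calls plus $O(nm)$ bookkeeping to extract the $S_i$, i.e.\ in $O(n(m+\tau))$ time; the rank of $S$ in $\mathcal{M}_2$ is just the number of distinct goods appearing in $S$, computable in $O(nm)$ time. So every rank query costs $O(n(m+\tau))$ time. Finally I would plug in the query complexity: the algorithm of \citet{Chakrabarty2019MatroidIntersection} solves matroid intersection on a ground set of size $\hat N$ and rank $\hat r$ using $O(\hat N \sqrt{\hat r}\,\log \hat N)$ rank-oracle queries. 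Here $\hat N = nm$, and $\hat r$ is the size of a maximum common independent set, which is at most $\min\{|G|, \sum_i v_i(G)\} \le m$; this gives $O(nm^{3/2}\log(nm))$ queries. Multiplying by the per-query cost $O(n(m+\tau))$ yields the claimed $O(n^2 m^{3/2}(m+\tau)\log(nm))$ bound.

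The conceptual content here is light — the reduction and the fact that a \MAXUSW{} allocation may be taken non-redundant are standard once \cref{obs:exchange} and the prelims' characterisation of clean allocations are in hand — so the main obstacle is really careful bookkeeping in the last step. One must keep the ground-set size ($\Theta(nm)$) and rank ($O(m)$) of the matroid intersection instance cleanly separated from the parameters $n, m$ of the fair-division instance when substituting into the cited complexity, and one must confirm that a rank query of $\mathcal{M}_1$ can indeed be simulated within $O(n(m+\tau))$ time (rather than, say, $O(nm\tau)$ if one were to rebuild the bundles naively). A secondary point worth stating explicitly is that the cost of a \emph{single} valuation-oracle call is unchanged under the truncation to $v_i^{\texttt{new}}$, so the same argument gives the identical running time for the truncated instances used inside \citeauthor{Babaioff2021Dichotomous}'s algorithm; that observation is what makes this lemma the workhorse of the subsequent runtime comparison.
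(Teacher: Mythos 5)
Your proposal is correct and follows essentially the same route as the paper: the same reduction to matroid intersection on the agent--good ground set (your direct-sum matroid is exactly the paper's matroid with rank $\sum_i v_i(S_i)$, paired with the same partition matroid), the same $O(n(m+\tau))$ per-rank-query accounting, and the same substitution of ground-set size $nm$ and rank $O(m)$ into the bound of \citet{Chakrabarty2019MatroidIntersection}. No gaps to report.
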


Note immediately that when $m = \Theta(n)$, our algorithm computes a \MAXUSW{} allocation faster than the matroid intersection based approach. 
We now present the runtime of \citeauthor{Babaioff2021Dichotomous}'s algorithm.

\begin{theorem}\label{thm:babaioff-runtime}
    The algorithm by \citet{Babaioff2021Dichotomous} computes Lorenz dominating allocations in $\cal O(n^6 m^{7/2}(m + \tau)\log{nm})$ time.
\end{theorem}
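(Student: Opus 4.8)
The plan is to express the running time as the product of two quantities — the number of improving path transfers the algorithm performs, and the cost of one search for such a transfer — and to observe that the cost of computing the initial \MAXUSW{} allocation, which is $O(n^2 m^{3/2}(m+\tau)\log nm)$ by \cref{lem:mat-intersection-time}, is a lower-order term.

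First I would bound the number of iterations. Throughout the run the algorithm maintains a clean allocation, so for every intermediate allocation $X$ we have $\sum_{i \in N} v_i(X_i) = \sum_{i \in N} |X_i| \le |G| = m$. Consider the integer-valued potential $\Phi(X) = n^2 \sum_{i \in N} v_i(X_i)^2 + 2\sum_{i \in N} \pi(i)\, v_i(X_i)$. Since $\sum_i v_i(X_i) \le m$ (hence $\sum_i v_i(X_i)^2 \le m^2$) and $\pi(i) \le n$, we get $0 \le \Phi(X) \le n^2 m^2 + 2nm = O(n^2 m^2)$. Expanding the squares shows $n^2 \sum_{i \in N}\bigl(v_i(X_i) + \tfrac{\pi(i)}{n^2}\bigr)^2 = \Phi(X) + \tfrac{1}{n^2}\sum_{i \in N}\pi(i)^2$, i.e. the objective equals $\Phi$ up to a fixed scaling and a constant shift. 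Consequently every improving path transfer strictly increases the integer $\Phi$, hence increases it by at least $1$, so the algorithm performs $O(n^2 m^2)$ iterations.

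Next I would bound the work of a single iteration. Following \citeauthor{Babaioff2021Dichotomous}, detecting an improving path transfer from the current allocation reduces to guessing which ordered pair of agents would change utility (one incremented, one decremented) and, for each of these $O(n^2)$ guesses, computing a \MAXUSW{} allocation of the instance whose valuations are capped at the corresponding thresholds $k_i$ and testing whether the resulting utility profile raises the objective. Capping a matroid rank function leaves the oracle-query cost at $O(\tau)$, so by \cref{lem:mat-intersection-time} each such computation costs $O(n^2 m^{3/2}(m+\tau)\log nm)$, and one iteration costs $O(n^4 m^{3/2}(m+\tau)\log nm)$. Multiplying the two bounds gives $O(n^2 m^2)\cdot O(n^4 m^{3/2}(m+\tau)\log nm) = O(n^6 m^{7/2}(m+\tau)\log nm)$, which dominates the cost of the initial \MAXUSW{} allocation and yields the stated bound.

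The main obstacle is pinning down the two structural counts for \citeauthor{Babaioff2021Dichotomous}'s procedure precisely: one must make sure the perturbation term does not permit unboundedly many "infinitesimal" improvements — this is exactly what the rescaling to the integer potential $\Phi$ settles — and one must verify from their description that each round performs only $O(n^2)$ matroid-intersection calls (rather than, say, a single call or a larger enumeration). Everything else is bookkeeping: confirming cleanness is preserved, that capping does not inflate the oracle cost, and that the arithmetic on the exponents works out.
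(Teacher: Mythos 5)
Your bound is correct and the overall skeleton matches the paper's: total time $=$ (number of \MAXUSW{} computations) $\times$ (cost of one such computation via \cref{lem:mat-intersection-time}), with the initial \MAXUSW{} allocation being lower-order. The difference is in how the count is obtained. The paper's proof is a one-liner: it simply cites \citet{Babaioff2021Dichotomous} for the fact that their algorithm invokes a \MAXUSW{} computation at most $O(n^4 m^2)$ times, and multiplies by the $O(n^2 m^{3/2}(m+\tau)\log nm)$ per-call cost. You instead re-derive that count: your rescaled potential $\Phi(X) = n^2\sum_i v_i(X_i)^2 + 2\sum_i \pi(i)v_i(X_i)$ is indeed an integer equal to $n^2$ times the objective minus the allocation-independent term $\frac{1}{n^2}\sum_i \pi(i)^2$, is bounded by $n^2m^2 + 2nm$ on clean allocations, and strictly increases (hence by at least $1$) at every improving path transfer, giving $O(n^2m^2)$ improving steps; combined with your claim of $O(n^2)$ capped-\MAXUSW{} calls per step this reproduces $O(n^4m^2)$ total calls. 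What your route buys is a self-contained justification of the iteration count (in particular, it settles the worry that the $\pi(i)/n^2$ perturbation could allow arbitrarily many tiny improvements), which the paper leaves implicit in the citation; what it costs is that the $O(n^2)$-calls-per-round figure is a claim about the internal structure of \citeauthor{Babaioff2021Dichotomous}'s search that you flag but do not verify, whereas the paper avoids this by quoting their aggregate bound directly. Either way the arithmetic yields the stated $O(n^6 m^{7/2}(m+\tau)\log nm)$.
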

\begin{proof}
\citet{Babaioff2021Dichotomous} show that their algorithm computes a \MAXUSW{} solution at most $O(n^4m^2)$ times. Combining this with \cref{lem:mat-intersection-time}, we get the required time complexity.
\end{proof}

Indeed, our algorithm is significantly faster than that of \citet{Babaioff2021Dichotomous}. 
This speedup mainly stems from two sources. 
First, even though \citet{Babaioff2021Dichotomous} use transfer paths, our method of computing them is faster. 
Second, by carefully choosing which transfer paths to check for, we check for much fewer paths. 
Combining these two factors, Yankee Swap offers a significantly better worst-case runtime. 
In particular, when $m = \Theta(n)$, the worst-case runtime of our algorithm is faster by a factor of $O(n^{13/2} \log{n})$.

\section{Conclusions and Future Work}\label{sec:conclusion}
In this work, we show that when agents have binary submodular valuations, Yankee Swap offers a simple and fast method to output fair and efficient allocations. The entire algorithmic framework can be implemented using no more than a few lines of code (Algorithms \ref{algo:yankee-swap} and \ref{algo:transfer}), and offers a far better worst-case runtime guarantee than the current state of the art. The simplicity of Yankee Swap is its key strength: it is easy to understand (even by non-experts) and implement, and can easily be adapted to different settings. This is all achieved while offering the same strong fairness and efficiency guarantees as the current state of the art.

This work highlights the surprising power of combinatorial arguments in computing transfer paths. 
Unlike prior work in this space, we do not invoke complex matroid optimization algorithms, from which the proofs are directly derived, but rather utilize a \emph{simple} approach, `relegating' the complexity to our careful combinatorial analysis. 
We believe that Yankee Swap can be applied to compute justice criteria beyond leximin. 
More specifically, we conjecture that when agents have entitlements (or priority weights) \citep{chakraborty2021weighted}, a modified version of Yankee Swap can be used to compute a weighted leximin allocation. 
We also believe that Yankee Swap can be applied in fair chore allocation problems, and with some adaptations, in settings where agents do not have binary valuations.
% Furthermore, optimal transfer paths in Yankee Swap are equivalent to paths in the matroid exchange graph and can be computed in polynomial time. 

% This work suggests the existence of several simple algorithms that are as powerful as complex algorithms present in the literature. 
% % We believe that simplicity is key in the implementation of such approaches; algorithms which are too complicated to understand will never reach the required level of buy-in from non-specialists irrespective of the quality of the allocations they output. 
% Indeed, several problems in fair allocation are yet to admit simple approaches --- the computation of EF1 and Pareto optimal allocations when agents have additive valuations is one such problem. 
% Other examples include the computation of an EFX allocation when there are only three agents and the computation of a welfare maximizing EF1 allocation.

%%%%%%%%%%%%%%%%%%%%%%%%%%%%%%%%%%%%%%%%%%%%%%%%%%%%%%%%%%%%%%%%%%%%%%%%

%%% The acknowledgments section is defined using the "acks" environment
%%% (rather than an unnumbered section). The use of this environment 
%%% ensures the proper identification of the section in the article 
%%% metadata as well as the consistent spelling of the heading.

\section*{Acknowledgements}
The authors would like to thank anonymous reviewers at WINE 2022 and AAMAS 2023 for useful feedback. The authors would also like to thank Rohit Vaish for feedback on a preliminary version of the paper.

\bibliographystyle{plainnat}
\bibliography{abb,literature}

\newpage
\appendix 

\section{Missing Proofs from Section \ref{subsec:path-transfers}}\label{apdx:path-transfers}
The claims with missing proofs in Section \ref{subsec:path-transfers} are very similar to claims from \citet{Barman2021MRFMaxmin}. 

In order to prove them, we must first establish some basic preliminaries about matroid theory; something we did not have to do for proofs before Section \ref{subsec:path-transfers}. 

\subsection{Matroids}\label{apdx:matroids}
A matroid $M$ is defined by a tuple $(E, \cal I)$ where $E$ is referred to as the {\em ground set} and $\cal I$ is a collection of subsets of $E$. Any matroid $(E, \cal I)$ must satisfy the following properties:
\begin{description}
    \item[(I1)] $\emptyset \in \cal I$
    \item[(I2)] If $I \in \cal I$ and $J \subseteq I$, then $J \in \cal I$
    \item[(I3)] If $I_1, I_2 \in \cal I$ such that $|I_1| > |I_2|$, then there exists an element $e \in I_1 \setminus I_2$ such that $I_2 + e \in \cal I$ 
\end{description}
Each $I \in \cal I$ is referred to as an independent set. 

The rank function of a matroid $r_M: 2^E \rightarrow \mathbb{Z}$ is defined as follows for all $S \subseteq E$
\begin{align*}
    r_M(S) = \max_{I \subseteq S: I \in \cal I} |I|
\end{align*}
In other words, the rank of a set is the largest independent set that is contained in it. It is well known that the rank function is a binary submodular function. It is also well known that every binary submodular function corresponds to the rank function of some matroid \citep{schrijver-book, oxley2011matroids}. Another important property of the rank function that will come in handy is that $r_M(S) = |S|$ iff $S$ is independent in $M$.

The union of $k$ matroids $M_1 = (E_1, \cal I_1)$, $M_2 = (E_2, \cal I_2)$, $\dots$, $M_k = (E_k, \cal I_k)$ is defined by a tuple $M = (\bigcup_{i = 1}^k E_i, \cal I)$ where $\cal I$ is given by 
\begin{align*}
    \cal I = \{I_1 \cup I_2 \cup \dots \cup I_k | I_i \in \cal I_i \, \forall i \in [k]\}
\end{align*}
The union of $k$ matroids is known to be a matroid \citep{schrijver-book}. 

Coming back to our fair allocation problem, each agent's valuation function is a matroid rank function and therefore defines a matroid over the set of goods $G$. 
More formally, for each agent $i \in N + 0$, we have a matroid $M_i = (G, \cal I_i)$ whose rank function is given by $v_i$. If an allocation is non-redundant i.e. $v_i(X_i) = |X_i|$ for all $i \in N + 0$, then $X_i \in \cal I_i$ for all $i \in N+0$. 

We are now ready to prove our claims.
% Further, consider the union of these $n+1$ matroids, $M_0, M_1, \dots, M_n$ which we shall denote using $M$. It is easy to see that in any non-redundant allocation $X$, the set of goods $\bigcup_{i \in N}$
\subsection{Proof of Lemma \ref{lem:path-transfer-equivalence}}
In order to prove \Cref{lem:path-transfer-equivalence}, we show the connection between \cref{lem:path-transfer-equivalence} and an equivalent Lemma in \citet{Barman2021MRFMaxmin}. 
The equivalent Lemma (Lemma 1) from \citet{Barman2021MRFMaxmin} is given as follows:
\begin{lemma}\label{lem:barman-lemma-1}
Let $A = (A_0, A_1, \dots , A_n) \in \cal I_0 \times \cal I_1 \times \dots \times \cal I_n$ be an allocation comprised of independent sets and, for agents $i \ne j$, let $P = (g_1, g_2, \dots , g_t)$ be a shortest path in the exchange graph $\cal G(A)$ between $F_i(A_i)$
and $A_j$ (in particular, $g_1 \in F_i(A_i)$ and $g_t \in A_j$). Then, for all $k \in [n] \setminus \{i, j\}$, we have $A_k \Delta P \in I_k$ along with
$(A_i \Delta P) + g_1 \in \cal I_i$ and $A_j - g_t \in \cal I_j$.
\end{lemma}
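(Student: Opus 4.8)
The plan is to prove this as an instance of the classical shortest-augmenting-path theorem for matroids (see \citet{schrijver-book}), specialized to each agent's matroid $M_k = (G, \cal I_k)$. The argument is local: since $A = (A_0, \dots, A_n)$ is a partition into independent sets and goods are transferred along $P$ so that only the agents owning goods on the path are affected, it suffices to fix an agent and show that its resulting bundle is independent in its own matroid. Accordingly, for each $k \in [n] \setminus \{i,j\}$ I would first read off, from the path edges, exactly which goods leave $A_k$ and which enter it: whenever a good $g_s \in A_k$ lies on $P$, the path edge $g_s \to g_{s+1}$ certifies, by the definition of $\cal G(A)$, that $A_k - g_s + g_{s+1} \in \cal I_k$, so $g_s$ is removed and its successor $g_{s+1}$ is added. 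The new bundle $A_k \Delta P$ is obtained by performing all of these exchanges simultaneously.

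The core difficulty is that a single agent may own several goods on $P$, so I cannot simply apply the edge-certified single exchanges one at a time; I must show that performing them \emph{simultaneously} preserves independence. This is exactly the setting of the matroid multiple-exchange lemma: if $h_1, \dots, h_r \in A_k$ are the path goods leaving $A_k$ and $g'_1, \dots, g'_r$ are their respective successors entering $A_k$, then $(A_k \setminus \{h_1, \dots, h_r\}) \cup \{g'_1, \dots, g'_r\} \in \cal I_k$ provided the bipartite exchange graph that pairs each $h_s$ with every successor $g'_{s'}$ for which $A_k - h_s + g'_{s'} \in \cal I_k$ admits the diagonal pairing as its \emph{unique} perfect matching. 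I would establish this lemma by induction on $r$ using the exchange axiom (I3) (equivalently, submodularity of the rank function $r_{M_k}$), peeling off the exchange corresponding to the last path good owned by $k$.

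The uniqueness of this matching --- equivalently, the absence of ``shortcut'' exchanges --- is where the shortest-path hypothesis on $P$ enters, and this is the step I expect to be the main obstacle. The claim is that there is no edge $g_s \to g_{s'}$ in $\cal G(A)$ with $s' > s+1$: such a chord would let us replace the subpath $g_s, g_{s+1}, \dots, g_{s'}$ by the single edge $g_s \to g_{s'}$, yielding a strictly shorter (and still simple, hence genuine) walk from $F_i(A_i)$ to $A_j$, contradicting that $P$ is a shortest path. Restricting this no-chord property to the goods owned by a single agent $k$ rules out every off-diagonal exchange that could spoil uniqueness, so the multiple-exchange lemma applies and gives $A_k \Delta P \in \cal I_k$.

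Finally, I would dispatch the two endpoints. For the sink $j$, the good $g_t \in A_j$ is simply removed and no good enters, so $A_j - g_t \in \cal I_j$ holds trivially because subsets of independent sets are independent (axiom (I2)). For the source $i$, the good $g_1 \in F_i(A_i)$ is added, which by the definition of $F_i$ means $A_i + g_1 \in \cal I_i$; any remaining goods of $A_i$ lying on $P$ are treated by the same simultaneous-exchange argument as the middle agents, yielding $(A_i \Delta P) + g_1 \in \cal I_i$. Combining the three cases completes the proof.
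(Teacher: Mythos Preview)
Your proposal is correct and follows the classical matroid argument (the ``no-chord'' property of shortest paths combined with the unique-perfect-matching exchange lemma, as in \citet[Chapter 39]{schrijver-book}). The paper, however, does not prove this lemma at all: it simply quotes it verbatim from \citet[Lemma~1]{Barman2021MRFMaxmin} and observes that including agent~$0$ is harmless since $v_0$ is additive and hence an MRF. So your route is not so much \emph{different} from the paper's as it is strictly more informative --- you supply the argument that the paper outsources.

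One small point worth tightening in your write-up: for the source agent $i$, you correctly note that $A_i + g_1 \in \cal I_i$ and that any further goods of $A_i$ on $P$ are handled by the same simultaneous-exchange argument. Just be explicit that the no-chord property still yields the required triangular structure here: the exchange edges certified by $P$ are of the form $A_i - h_s + g'_s \in \cal I_i$ (not $A_i + g_1 - h_s + g'_s$), so the unique-matching lemma should be applied to $A_i$ with the incoming goods being $g_1$ together with the successors $g'_s$, and the outgoing goods being the $h_s$; the triangular ordering then forces the identity matching on the $h_s$'s while $g_1$ is ``unmatched'' on the removal side, which is exactly what gives $|(A_i \Delta P) + g_1| = |A_i| + 1$ and independence. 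This is standard, but spelling it out avoids any ambiguity about which independent set the exchange lemma is being applied to.
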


This is the exact statement from \citet{Barman2021MRFMaxmin}, with the only change being that we include the agent $0$ in the statement. 
Since agent $0$ has a binary additive (and in particular, submodular) valuation, the lemma still holds with it included.

Let us translate each sentence in \Cref{lem:barman-lemma-1} to our notation. 
The first sentence requires each $A_i$ in the allocation $A$ to be an independent set in the matroid $M_i$. In other words, $v_i(A_i) = |A_i|$ for all $i \in N+0$ which is equivalent to requiring $A$ to be non-redundant. 

The second sentence needs no further clarification since we use the same notation as \citet{Barman2021MRFMaxmin}. 
$F_i(A_i)$ refers to the set of goods that gives $i$ a marginal gain of $1$ and $\Delta$ refers to the operation of transferring backwards along the path but does not include the final transfer of $g_1$ to $i$ i.e. $\Delta$ involves giving $g_t$ to the agent that had $g_{t-1}$, $g_{t-1}$ to the agent that had $g_{t-2}$ and so on till we finally {\em discard} $g_1$. 
Therefore, $A_k \Delta P$ is the bundle $A_k$ after goods have been transferred according to the path $P$. Note that $|A_k \Delta P| = |A_k|$ for all $k \ne j$; for $j$, we have $|A_j \Delta P| = |A_j| - 1$.

While overloading the definition of $\Delta$ may lead to ambiguity, it is important to do so to understand the result of \citet{Barman2021MRFMaxmin}. To help with this understanding, for this subsection alone, we use $\Delta$ to refer to the path transfer operation. In all the other parts of this paper, $\Delta$ is used to denote the function that specifies the value of marginal gain. 

The third and final sentence says that after transferring along the path, all the other agents $k \notin \{i, j\}$ still hold independent sets. 
Since their bundles do not change in size, this is equivalent to saying that their valuation does not change. 
It also says that $A_j - g_t$ is an independent set which implies $A_j$ loses a value of only $1$. 
Finally we have $A_i \Delta P + g_1$ is independent, which implies agent $i$ gains a value of $1$. This is because $|A_i \Delta P| = |A_i|$.

Combining these observations, we translate Lemma \ref{lem:barman-lemma-1} as follows:

\lempathtransfer*

\subsection{Proof of Theorem \ref{thm:exchange-path}}
To prove this Theorem, we use a proof very similar to that of Lemma 5 in \citet{Barman2021MRFMaxmin}. They use the proof to prove a similar (albeit not equivalent) claim. We slightly modify the proof to adapt it for our claim. Our proof will also use the following result from \citet[Theorem 42.4]{schrijver-book} and \citet[Lemma 4]{Barman2021MRFMaxmin}.

\begin{lemma}[\citet{schrijver-book, Barman2021MRFMaxmin}]\label{lem:schriver-lemma}
Let $(A_1,\dots, A_n) \in \cal I_1 \times \dots \times \cal I_n$ be a partial allocation comprised of independent sets, i.e.,
$\bigcup_{i = 1}^n A_i$ is independent in the matroid union of these $n$ matroids $M$. Then, $|\bigcup_{i = 1}^n A_i| < r_M(G)$ if and only if there exists a path in the exchange graph $\cal G(A)$ between $\bigcup_{i = 1}^{n} F_i(A_i)$ and some good $h \notin \bigcup_{i = 1}^n A_i$.
\end{lemma}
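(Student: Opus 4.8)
The plan is to prove both directions separately, handling the easy ($\Leftarrow$) direction with \cref{lem:path-transfer-equivalence} and reducing the harder ($\Rightarrow$) direction to the Nash--Williams/Edmonds min--max rank formula for the matroid union, which I take as known \citep{schrijver-book}. Throughout write $U = \bigcup_{i=1}^n A_i$; since the $A_i$ are disjoint we have $|U| = \sum_i |A_i|$, and $U$ is independent in $M$, so $r_M(G) \ge r_M(U) = |U|$.

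For the ($\Leftarrow$) direction, I would reintroduce the dummy agent $0$ with $A_0 = G \setminus U$; since $v_0$ is additive, $(A_0, A_1, \dots, A_n)$ is non-redundant. A path from $\bigcup_i F_i(A_i)$ to an unallocated good $h$ starts, say, at a source in $F_i(A_i)$ and ends in $A_0$, so it is a path from $F_i(A_i)$ to $A_0$; the shortest such path is a transfer path from $i$ to $0$ by \cref{lem:path-transfer-equivalence} (a shortest path to the target set $A_0$ stops on reaching an unallocated good, so the extra out-edges agent $0$ contributes are never traversed and are irrelevant). Transferring along it raises $|A_i|$ by one and leaves every other $|A_k|$ with $k \in N$ unchanged, yielding disjoint independent sets $A'_1, \dots, A'_n$ with $\sum_k |A'_k| = |U| + 1$. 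Their union is independent in $M$, hence $r_M(G) \ge |U|+1 > |U|$, as required.

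For the ($\Rightarrow$) direction I would argue the contrapositive: assume no path exists from $\bigcup_i F_i(A_i)$ to any unallocated good, and show $|U| = r_M(G)$. Let $R$ be the set of goods reachable in $\cal G(A)$ from $\bigcup_i F_i(A_i)$ (sources included). The absence of an augmenting path means $R$ contains no unallocated good, i.e. $R \subseteq U$, and by construction $R$ is closed under the out-edges of $\cal G(A)$. I would then plug $T = G \setminus R$ into the matroid-union rank formula $r_M(G) = \min_{T \subseteq G}\left(|G \setminus T| + \sum_{i} v_i(T)\right)$. It suffices to establish the claim that $v_i(G \setminus R) = |A_i \setminus R|$ for every $i$, since then $|G \setminus T| + \sum_i v_i(T) = |R| + \sum_i |A_i \setminus R| = |R| + \left(|U| - \sum_i |A_i \cap R|\right) = |R| + |U| - |R| = |U|$, where I use $\sum_i |A_i \cap R| = |U \cap R| = |R|$ because the $A_i$ partition $U \supseteq R$. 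This gives $r_M(G) \le |U|$, which together with $r_M(G)\ge|U|$ forces equality, contradicting $|U| < r_M(G)$.

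The main obstacle, and the one place requiring real matroid reasoning, is the claim $v_i(G \setminus R) = |A_i \setminus R|$. The inequality $\ge$ is immediate since $A_i \setminus R \subseteq G \setminus R$ is independent in $M_i$. For $\le$ I must show $A_i \setminus R$ spans $G \setminus R$ in $M_i$: take any $g \in (G \setminus R)\setminus A_i$ and suppose for contradiction that $(A_i \setminus R) + g \in \cal I_i$. If $A_i + g \in \cal I_i$ then $g \in F_i(A_i)$ is a source and so $g \in R$, a contradiction. Otherwise $A_i + g$ contains a unique circuit $C$ with $g \in C$; since $(A_i \setminus R)+g$ is independent, $C$ must meet $A_i \cap R$ at some element $g''$, and circuit exchange gives $A_i - g'' + g \in \cal I_i$, i.e. $v_i(A_i - g'' + g) = |A_i|$, which is precisely an out-edge $g'' \to g$ of $\cal G(A)$. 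As $g'' \in R$ and $R$ is closed under out-edges, this forces $g \in R$, again a contradiction. Hence every such $g$ lies in the span of $A_i \setminus R$, proving the claim; the remaining steps are routine counting and an appeal to the min--max formula.
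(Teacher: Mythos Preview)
The paper does not supply its own proof of this lemma; it simply imports it from \citet[Theorem~42.4]{schrijver-book} and \citet[Lemma~4]{Barman2021MRFMaxmin}. Your argument is correct and is in fact the standard one underlying those references: the $(\Leftarrow)$ direction is a straightforward augmentation (your reduction to \cref{lem:path-transfer-equivalence} via the dummy agent $0$ is sound, since a shortest path to $A_0$ never uses an out-edge of $A_0$), and the $(\Rightarrow)$ direction plugs the complement of the reachable set $R$ into the Nash--Williams/Edmonds min--max formula $r_M(G)=\min_T\bigl(|G\setminus T|+\sum_i v_i(T)\bigr)$. The key step---showing $A_i\setminus R$ spans $G\setminus R$ in $M_i$ via the unique-circuit property and closure of $R$ under out-edges---is exactly the argument that makes the matroid-union augmenting-path algorithm work. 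So you have reconstructed the cited proof rather than found an alternative to it, which is entirely appropriate here since the paper itself offers no proof to compare against.
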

Putting all these together, we now prove \Cref{thm:exchange-path}.
\thmexchangegraph*

\begin{proof}
For this proof, we use the ideas and notation of \citet[Lemma 5]{Barman2021MRFMaxmin}. The only place where our proof differs is our argument for why we can instantiate Lemma \ref{lem:schriver-lemma}.

We define the set of goods reachable from the set $F_i(X)$ in the exchange graph $\cal G(X)$ as the set $R \subseteq G$. 
We use $B \subseteq N$ to denote the set of agents who have least one good in $R$, i.e., $B = \{k \in [n] : R \cap X_k \ne \emptyset \} + i$; $i$ is explicitly included in $B$. 

For all agents $k \in B$ we have $F_k(X) \subseteq R$. This is trivially true if $k = i$.
Otherwise, since $k \in B$, there exists a good $g \in A_k$ which is reachable from $F_i(X)$. By definition
of the exchange graph, there exists an edge from $g \in X_k$ to all the goods in $F_k(X)$. 
Therefore, $F_k(X)$ is reachable from $F_i(X)$ and $F_k(X) \subseteq R$.

Now, assume for contradiction that a path from $F_i(X)$ to some $g \in X_j$, $j \in S^{+}$ does not exist. This implies that for all $k \in B$ we have $k \in S^{-} \cup S^{=}$. 
Note that $i \in B$ and $i \in S^{-}$ implies $\sum_{k \in B} |X_k| < \sum_{k \in B} |Y_k|$.

Now, consider the restricted allocation $X' = (X_i)_{i \in B}$; in other words, $X'$ is the allocation $X$ restricted to the set of agents $B$. Let $M_B$ be the union of the matroids $\{M_i\}_{i \in B}$. Then, we have $\sum_{k \in B} |X_k| < \sum_{k \in B} |Y_k| \le r_{M_B}(G)$. Since $\sum_{k \in B} |X_k| = |\bigcup_{k \in B} X_k| < r_{M_B}(G)$, we can instantiate Lemma \ref{lem:schriver-lemma} with the allocation $X'$ and the matroids $\{M_i\}_{i \in B}$. We get that there is a path $P$ in the exchange graph $\cal G(X')$ (and, hence, in $\cal G(X)$) from $\bigcup_{k \in B}F_k(X)$ to a good
$h \ne \bigcup_{k \in B}X_k$. Since $R \subseteq \bigcup_{k \in B}X_k$ we get that $h \notin R$.

This creates a contradicts with the definition of the reachable set $R$. if there is a path from $\bigcup_{k \in B}F_k(X) \subseteq R$ to $h$, then $h$ is reachable from $F_i(X)$ as well. Therefore, the result follows.
\end{proof}

\section{Missing Proofs from Section \ref{subsec:comparison}}\label{apdx:comparison}
The following proof requires a basic understanding of matroids. A simple set of definitions have been presented in \cref{apdx:matroids}.
\lemmatintersection*
\begin{proof}
The algorithm to compute a \MAXUSW{} allocation using the matroid intersection problem works as follows: we construct two matroids both defined on the ground set of all good-agent pairs i.e. $E = \{(g, i)\, | \, g \in G, i \in N\}$. Intuitively, any $S \subseteq E$ can be viewed as an allocation where each agent $i$ receives the bundle $X_i = \{g\, |\, (g, i) \in S\}$; note that goods may be allocated to more than one agent in this allocation. 
The first matroid $M_1 = (E, \cal I_1)$ is a simple partition matroid whose independent sets correspond to the set of all allocations which assign each item to at most one agent. 
The second matroid $M_2 = (E, \cal I_2)$ has a rank function $r_2$ defined as $r_2(S) = \sum_{i \in N} v_i(\{g\, |\, (g, i) \in S\})$ i.e. it is the \USW{} of the allocation defined by $S$.

It is easy to see that the largest cardinality set at the intersection of $\cal I_1$ and $\cal I_2$ corresponds to the \MAXUSW{} allocation for the input problem instance. \citet{Chakrabarty2019MatroidIntersection} provide an algorithm to compute the largest cardinality set at the intersection of two matroids that runs in time $\cal O(|E|\sqrt{M} (\log{|E|}) \cal T_{rank})$ where $M$ is the size of the maximum cardinality set in $\cal I_1 \cap \cal I_2$ and $T_{rank}$ is the complexity of computing the rank of both matroids. $M$ is trivially upper bounded by $m$ since each good can get allocated atmost once.

The rank function of the partition matroid can be computed using a simple parse of the set $S$ --- the rank of $S$ in $M_1$ is equal to the number of unique goods present in the allocation $S$.  This can be done in $O(nm)$ time.

Computing the rank function of the second matroid requires the parsing of $S$ to generate the allocation $X$ and then a query to each agent's valuation function to determine the value of each bundle. This can be done in $O(nm + n\tau)$ time. Putting these values in $\cal O(|E|\sqrt{M} (\log{|E|}) \cal T_{rank})$, we get that a \MAXUSW{} allocation can be computed in $O(n^2 m^{3/2} (m+\tau) \log{nm})$ time.
\end{proof}
\end{document}